    \newcommand{\corurl}{red}
    \newcommand{\corcite}{Green}
    \newcommand{\corlink}{blue}
    \newcommand{\corfile}{black}
\newcommand{\quot}[1]{\fracdiag{#1}{\sim\hspace*{0.4ex}}}
\newcommand{\quotf}[1]{\fracdiag{#1}{\sim_f\hspace*{0.4ex}}}
 \newcommand{\raisemath}[1]{\mathpalette{\raisem@th{#1}}}
 \newcommand{\raisem@th}[3]{\raisebox{#1}{$#2#3$}}
\newcommand{\peqsub}[2]{#1_{\raisemath{-0.75pt}{\scriptscriptstyle #2}}}
\newcommand{\fracdiag}[2]{\mathchoice{{\left.\raisebox{.2em}{$#1$}\middle/\raisebox{-.2em}{$#2$}\right.}}{#1/\hspace*{-0.75ex}#2}{#1/\hspace*{-0.75ex}#2}{#1/\hspace*{-0.75ex}#2}}
\newcommand{\dist}{-1.3ex}
\newcommand{\Z}{\mathbb{Z}}     
\newcommand{\R}{\mathbb{R}}     
\renewcommand{\S}{\mathbb{S}}   
\newcommand{\D}{\mathbb{D}}     
\newcommand{\refconchap}[1]{\relax\hyperref[#1]{\textcolor{\corlink}{\sectionref{#1}.}\ref*{#1}}}
\def\QED{{\boldmath$\rule{0.5em}{0.5em}$}}                                
\def\markatright#1{\leavevmode\unskip\nobreak\quad\hspace*{\fill}{#1}}    
\def\qed{\markatright{\QED}}
\newtheorem{theorem}{Theorem}[section]
\newtheorem{definition}[theorem]{Definition}
\newtheorem*{definitions}{Definitions}
\newtheorem{proposition}[theorem]{Proposition}
\newtheorem*{properties}{Properties}
\newtheorem{remark}[theorem]{Remark}
\newtheorem*{remarks}{Remarks}
\newtheorem{lemma}[theorem]{Lemma}
\renewenvironment{proof}[1][Proof]{ \textbf{#1.} }{\qed\\}     
\begin{document}

      \pagestyle{fancy}        \fancyhf{}
      \renewcommand{\headrulewidth}{0pt}
      \fancyfoot[C]{- \thepage\ -}
      \setlength{\headheight}{12.2pt}

      \vspace*{-1ex}

      \centerline{\textsf{\textbf{\Huge{Topology of the Misner Space}}}}

      \mbox{}\vspace*{-0.2ex}

      \centerline{\textsf{\textbf{\Huge{and its $g$-boundary}}}}

      \mbox{}\vspace*{2.5ex}

\centerline{
\begin{tabular}{ccc}
    \textsf{\textbf{Juan Margalef--Bentabol}}${}^{1,2}$       &     \quad & \textsf{\textbf{Eduardo J. S. Villaseñor}}${}^{2}$\\[0ex]
    \href{mailto:juanmargalef@ucm.es}{juanmargalef@ucm.es} & \quad & \href{mailto:ejsanche@math.uc3m.es}{ejsanche@math.uc3m.es}
\end{tabular}}

\vspace*{2ex}

\centerline{ \scriptsize
\begin{tabular}{ll}
    ${}^1$ & Instituto de Estructura de la Materia, CSIC, Serrano 123, 28006 Madrid, Spain.\\
    ${}^2$ & Unidad Asociada al IEM-CSIC, Grupo de Teorías de Campos y Física Estadística,\\
           & Instituto Universitario Gregorio Millán Barbany, Grupo de Modelización y\\
           & Simulación Numérica, Universidad Carlos III de Madrid, Madrid, Spain
\end{tabular}}

\abstract{\centerline{
\begin{minipage}{0.9\linewidth}\setlength{\parindent}{0pt}
     The Misner space is a simplified 2-dimensional model of the 4-dimensional Taub-NUT space that reproduces some of its pathological behaviours. In this paper we provide an explicit base of the topology of the complete Misner space $\R^2_1/boost$. Besides we prove that some parts of this space, that behave like topological boundaries, are equivalent to the $g$-boundaries of the Misner space.
\end{minipage}}}

\setlength{\parindent}{0pt}

  \section{Introduction}
    The Taub-NUT space-time is a spatially homogenous vacuum solution to the Einstein equations that displays many strange behaviours. In order to understand some of its pathologies C.W.~Misner introduced in a seminar entitled \emph{Taub-NUT space as a counterexample to almost anything} \cite{Taub-nut_contraejemplo}, a simpler $2$-dimensional model, the so-called \textbf{Misner space}. This space-time has still some strange behaviours that have been carefully studied in \cite{Haw_y_ellis,thorne1993misner}, and besides, it has been used as a toy-model for different purposes like big bounce models \cite{durin2006closed,hikida2005d}.\vspace*{2ex}

     The geometrical and topological properties of Misner space can be derived by constructing it in terms of the quotient space $\R^2_1/boost$. Following  this approach, we do not only recover the Misner space but also prove that some subsets of it behave like the $g$-boundaries introduced by S.W.~Hawking \cite{Sing_geom_Haw_1966} and R.~Geroch \cite{geroch1968local}. It is important to point out that the abstract definition of the $g$-boundary of a semiriemannian manifold is highly non trivial  and, in practice,  it is very difficult to compute it from the definition. In fact, in \cite{geroch1968local} and \cite{hajicek1970embedding} some examples are obtained but no computation is provided. This construction has an even  worse problem: whenever a boundary construction verifies, as the $g$-boundary does, some reasonable conditions, then a space-time can be constructed with unphysical boundary \cite{geroch1982singular}. Even the fact of being defined using geodesics is not satisfactory owing to an example due to R.~Geroch \cite{geroch1968singularity} of a geodesically complete space-time containing curves of finite length and bounded acceleration (hence there exist physical time-like observers that are incomplete). In part due to  these problems, the study of this and similar constructions has been essentially abandoned. We  think, however, that it is still interesting to verify that in the case of the complete Misner space, the $g$-boundary is what one expects to be.  Besides, for some of these completions, such as for the $g$-boundary, there exists a canonical way of defining a minimal refining of the topology such that the points of the boundary become T2-separated from all the rest of the points \cite{Hausdorff_separability} removing some of the worst problems of these constructions; hence, they  may regain some interest in the recent future.\vspace*{2ex}

    The paper is structured as follows, in section \ref{section hyperbolic rotation} we carefully study the Misner space as a quotient space, providing an explicit base of the quotient topology and deducing from it some topological properties (and problems).  In section \ref{section further considerations} we discuss, in some detail, the behaviour of the light-like geodesics in the extended Misner space. The concept of the $g$-boundary is introduced in section \ref{g-boudnary} and apply it to the the  Misner space. We show that the topology defined by the $g$-boundary coincides, precisely, with the quotient topology obtained in section \ref{section hyperbolic rotation}. We end this paper in section \ref{conclusions} with the discussion of the main results and our conclusions. For the convenience of the reader, we also provide an appendix where the necessary mathematical background is presented.

   \section{The Misner space as a quotient space}\label{section hyperbolic rotation}
        In this section, we construct the Misner space $M$  by considering the quotient space of the Minkowski space-time $\R^2_1=(\R^2,-dT^2+dX^2)$  under a discrete group $G_{\peqsub{\theta}{0}}$ of Lorentz  boosts. In this approach,  $M$ arises as the quotient space $\{T>|X|\}/G_{\peqsub{\theta}{0}}$ (or, equivalently $\{-T>|X|\}/G_{\peqsub{\theta}{0}}$), which will allow us to understand the pathological behaviour of its geodesics. Moreover, by considering the quotient over two adjacent quadrants of $\R^2_1$ plus the line in between them (e.g $\{T<X\}/G_{\peqsub{\theta}{0}}$), we get an isometric extension $C$ of $M$, that we will refer to as the \emph{extended Misner space}. This extension \emph{partially} solves the geodesic incompleteness problem of $M$. However, as we will see, it is not possible to extend $C$ further to completely solve the the geodesic incompleteness.

    \subsection{General facts about the Misner space}
        The Lorentz group $O_{1,1}(\R)$ is formed by the linear maps $H:\R^2\rightarrow\R^2$,  which preserve the Minkowski product $\langle \peqsub{P}{1},\peqsub{P}{2}\rangle=-\peqsub{T}{1}\peqsub{T}{2}+\peqsub{X}{1}\peqsub{X}{2}$. The subgroup composed by the space-orientation preserving and the time-orientation preserving maps is called \textbf{special orthochronous Lorentz group}:\vspace*{\dist}

        \[SO^+_{1,1}(\R)=\left\{\peqsub{H}{\theta}=\begin{pmatrix}
            \cosh\theta & \sinh\theta\\
            \sinh\theta & \cosh\theta
        \end{pmatrix}\ /\ \theta\in\R\right\}=\left\{\peqsub{H}{\theta}=\begin{pmatrix}
                \peqsub{\gamma}{\theta}  & \peqsub{v}{\theta}\peqsub{\gamma}{\theta}\\
                \peqsub{v}{\theta}\peqsub{\gamma}{\theta} & \peqsub{\gamma}{\theta}
            \end{pmatrix}\ /\ \theta\in\R\right\}\]

        where $\peqsub{v}{\theta}=\tanh\theta$ is the velocity of one inertial frame with respect to another one, $\theta$ can be interpreted as the rapidity \cite{levy1980speed} and, as usual, $\peqsub{\gamma}{\theta}=\frac{1}{\sqrt{1-\peqsub{v}{\theta}^2}}$. The boosts matrices $\peqsub{H}{\theta}$ satisfy the following:

        \begin{properties}\mbox{}\renewcommand{\labelenumi}{P.\arabic{enumi}}
            \begin{enumerate}
               \item \begin{minipage}[t]{\linewidth}  As $\peqsub{H}{\theta}$ is linear, it maps lines (through the origin) to lines (through the origin),\begin{wrapfigure}[11]{r}{0.32\linewidth}\vspace*{-3ex}
                        \centerline{\hspace*{2ex}\includegraphics[page=2,width=1.1\linewidth]{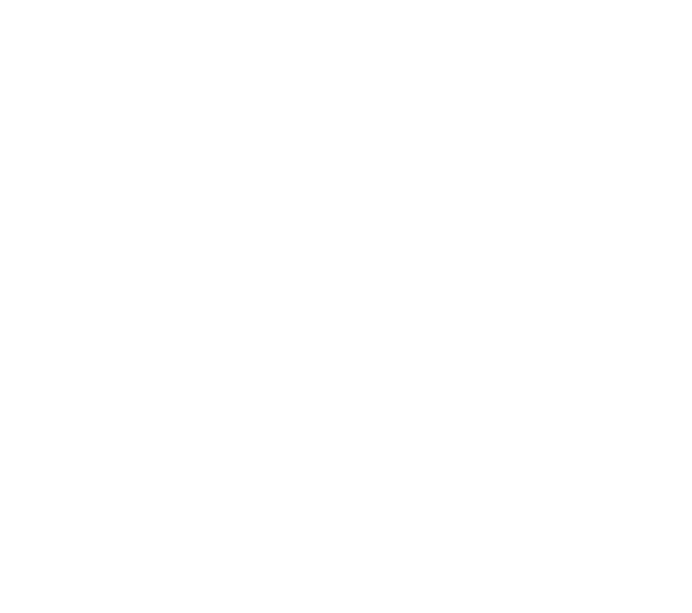}}
                    \end{wrapfigure} and it has as invariant lines the diagonals $\{X=\pm T\}$. Indeed each semi-line is invariant.
           \item  $H_{\theta}(L^i_{\psi})=L^i_{\psi+\theta}$ where for a fixed given quadrant $Q_i$, we denote $L^i_{\psi}$ the open semi-line in $Q_i$ that begins at the origin with hyperbolic angle $\psi$. From now on, $Q_i$ is the corresponding open quadrant (in between the diagonals) and $D_i$ is the corresponding open semi-diagonal where the origin is excluded.\label{propiedades p1 p2 p3}
           \item  $H_{\theta}(l^\pm_{\peqsub{T}{0}})=l^\pm_{\peqsub{T}{0}e^{\mp\theta}}$ where $l^{\pm}_{\peqsub{T}{0}}$ is the line with slope $\pm 1$ that cuts the semi-diagonal at the point $(T,X)=(\peqsub{T}{0},\mp \peqsub{T}{0})$.
           \item For $r\geq0$, the sets \vspace*{\dist}

           \[\mathcal{H}^\pm_{r}=\{(T,X)\in\R^2\ /\ -T^2+X^2=\pm r^2\}\qquad\qquad\]
            are invariant. Notice that for $r>0$, the hyperbolas $\mathcal{H}^-_{r}$ are space-like and $\mathcal{H}^+_{r}$ are time-like.
                    \end{minipage}
            \end{enumerate}
        \end{properties}

 \centerline{\includegraphics[width=0.75\linewidth]{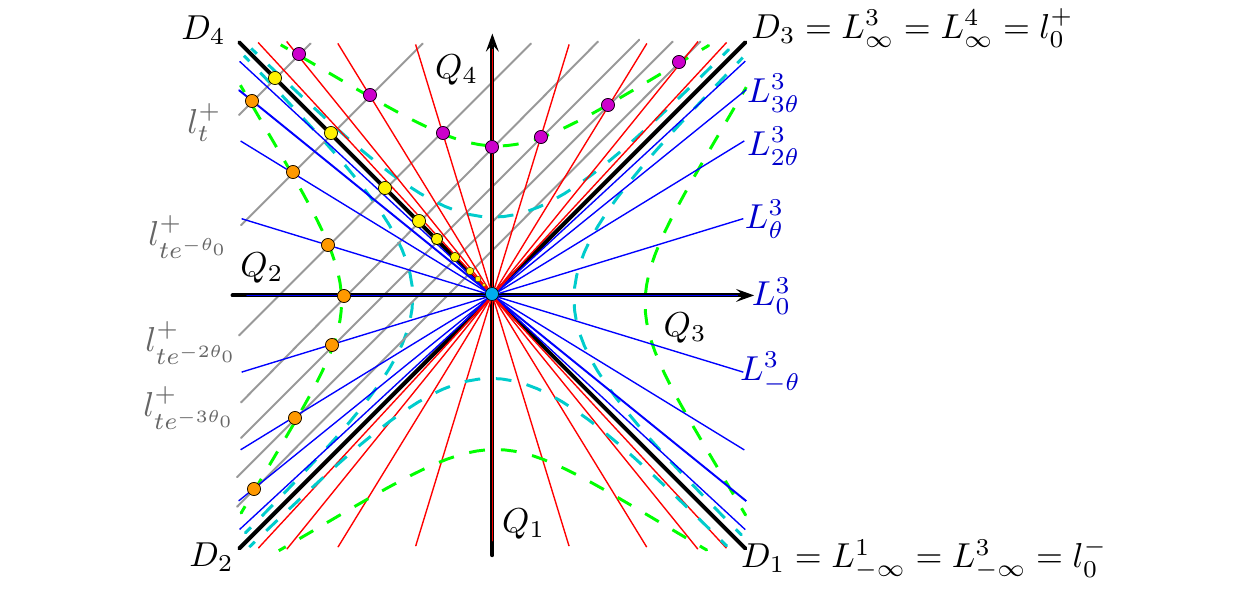}}

     If $\peqsub{\theta}{0}>0$ then $G_{\peqsub{\theta}{0}}=\langle H_{\peqsub{\theta}{0}}\rangle$ is an infinite cyclic group, and for every $P=(T,X)\neq (0,0)$, the orbit $G_{\peqsub{\theta}{0}}(P)$ is an infinite countable set. This action over $D_i$  is properly discontinuous and free. In particular, if $\peqsub{P}{0}=(\peqsub{T}{0},\pm \peqsub{T}{0})\in D_i$, then:\vspace*{\dist}

              \[ \peqsub{P}{n}=(H_{\peqsub{\theta}{0}})^n(P)=H_{n\peqsub{\theta}{0}}(P)=(\peqsub{T}{0}e^{\pm n\peqsub{\theta}{0}},\pm \peqsub{T}{0}e^{\pm n\peqsub{\theta}{0}})\]

     Thus, for forward iterations, the points over the diagonal $\{T=-X\}$ accumulate over the origin exponentially while the points over the diagonal $\{T=X\}$ grow to infinity exponentially. By using the exponential map $\mathrm{exp}:\R\rightarrow\R^+$, $x\mapsto e^{x}=|\peqsub{T}{0}|$, it is clear that each of the quotient spaces $\widetilde{D}_i=D_i/G_{\peqsub{\theta}{0}}$ is homeomorphic to a circle $\S^1\cong \R/\mathrm{mod}\,\peqsub{\theta}{0}$.\vspace*{2ex}

     The action of $G_{\peqsub{\theta}{0}}$ over $Q_i$  is also properly discontinuous and free. In particular $M^-_{\peqsub{\theta}{0}}=Q_1/G_{\peqsub{\theta}{0}}$ (resp. $M^+_{\peqsub{\theta}{0}}=Q_4/G_{\peqsub{\theta}{0}}$) is a smooth Hausdorff manifold homeomorphic to the cone $\R^-\times \S^1$ (resp. $ \R^+\times\S^1$).  In fact, by expressing  the Minkowski metric in hyperbolic coordinates  $(T,X)=(t \cosh(x),t\sinh(x))$ the action of $G_{\peqsub{\theta}{0}}$ is given by $H_{\peqsub{\theta}{0}}(t,x)=(t,x+\peqsub{\theta}{0})$ and the metric in $M^\pm_{\peqsub{\theta}{0}}$ is:\vspace*{\dist}

     \[dm_\pm^2=-dt^2+t^2dx^2\,,\quad (t,x)\in \R^\pm\times (\R/\mathrm{mod}\,\peqsub{\theta}{0})\]

     In particular, the circles $\{t=\peqsub{t}{0}\}\subset M^\pm_{\peqsub{\theta}{0}}$ are space-like. Notice that for $\peqsub{\theta}{0}=\pi$, both $M^\pm_\pi$ are isometric to the Misner space:\vspace*{\dist}

     \[dm^2=-\frac{dt'^2}{t'}+t'd\theta'^2\,,\quad (t',\theta')\in \R^+\times (\R/\mathrm{mod}\,2\pi)\]

     The future directed light-like geodesics over the Misner space $M^-_{\peqsub{\theta}{0}}$ are:\vspace*{\dist}

     \[\peqsub{\gamma}{\pm}(\tau)=\biggl(t(\tau),x(\tau)\biggr)=\biggr(-|\peqsub{t}{0}|\sqrt{1-2|\peqsub{\dot x}{0}|\tau},\peqsub{x}{0}\mp\frac{1}{2}
                     \ln\left(\rule{0ex}{2.5ex}1-2|\peqsub{\dot x}{0}|\tau\right)\biggl)\]

     where we have replaced $\peqsub{t}{0}$ by $-|\peqsub{t}{0}|$ in order to emphasize that $t$ ranges over $\R^-=(-\infty,0)$. The geodesics satisfy $\dot{t}>0$, $\peqsub{\dot t}{0}=|\peqsub{\dot x}{0}||\peqsub{t}{0}|$, and $\mathrm{sign}(\dot{x})=\pm1$. Summarizing, we have that the two future directed light-like geodesics obtained are incomplete as $\tau\in(-\infty,\frac{1}{2|\peqsub{\dot x}{0}|})$, but notice that, as $x$ is increasing and grows to infinite, they turn around the cylinder infinitely many times. We can see graphically the behaviour of the geodesics $\peqsub{\gamma}{\pm}$ in figure \ref{figure geod espiral}, the closer to $\S^1\times\{0\}$ (which corresponds to the apex, and hence it does not belong to $M^-_{\peqsub{\theta}{0}}$), the quicker with respect to the affine parameter a light-like geodesic turns around the cylinder.\label{warped product cone}\medskip

    \begin{figure}[h!]
        \centering
        \includegraphics[page=1,width=0.8\linewidth,page=1]{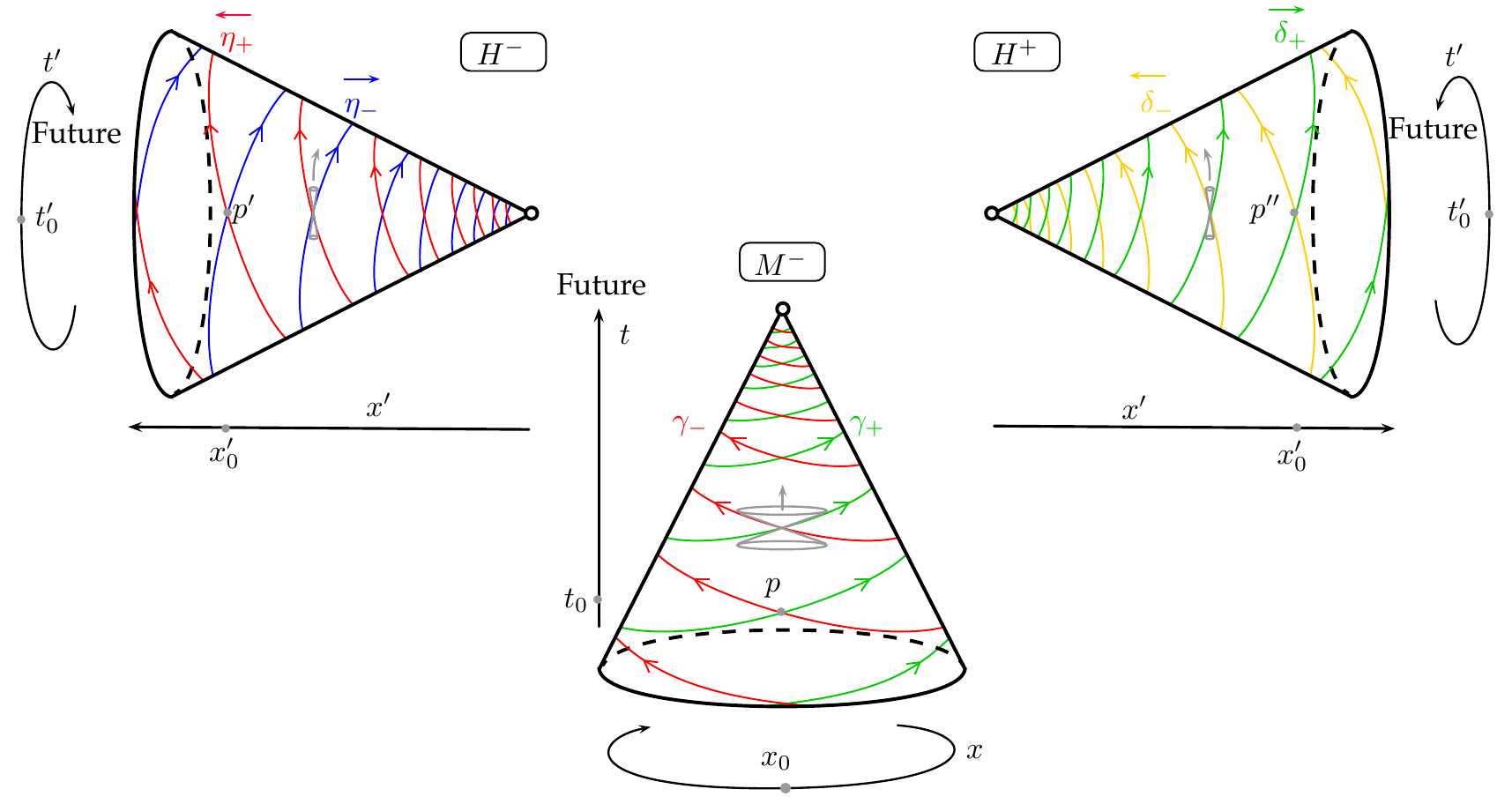}
        \caption{All the light-like geodesics turn around the semi-cylinder infinitely many times. The apex is in the future for both $\gamma_\pm$ and for $\delta_-$ and $\eta_+$ while it is in the past for $\delta_+$ and $\eta_-$.}\label{figure geod espiral}
    \end{figure}

    Similarly,  $H^-_{\peqsub{\theta}{0}}=Q_2/G_{\peqsub{\theta}{0}}$ (resp. $H^+_{\peqsub{\theta}{0}}=Q_3/G_{\peqsub{\theta}{0}}$) is a smooth Hausdorff manifold homeomorphic to the cone $\S^1\times\R^\pm$. In hyperbolic coordinates $(T,X)=(x' \sinh(t'),x'\cosh(t'))$, the metric in $H^\pm_{\peqsub{\theta}{0}}$ is\vspace*{\dist}

    \[dh_\pm^2=-x'^2 dt'^2+d x'^2\,,\quad (t',x')\in (\R/\mathrm{mod}\peqsub{\theta}{0})\times \R^\pm\]

    In this case, the circles $\{x'=\peqsub{x}{0}\}$ are time-like. The future directed light-like geodesics over $H^+_{\peqsub{\theta}{0}}$ are:\vspace*{\dist}

    \[\peqsub{\delta}{\pm}(\tau)=\biggl(t'(\tau),x'(\tau)\biggr)=\biggr(\peqsub{t}{0}\pm\frac{1}{2}\ln\left(\rule{0ex}{2.5ex}1\pm2\peqsub{\dot{t}}{0}\tau\right),\peqsub{x}{0}\sqrt{1\pm2\peqsub{\dot{t}}{0}\tau}\biggl)\qquad\left|\begin{array}{ll}
        \tau\in\left(-\tfrac{1}{2\peqsub{\dot{t}}{0}},+\infty\right) &\ \text{ for }\ \delta_+\\[1ex]
        \tau\in\left(-\infty,\tfrac{1}{2\peqsub{\dot{t}}{0}}\right) & \ \text{ for }\ \delta_-
    \end{array}\right.\]

    Notice that now $t$ is the periodic coordinate and we have $\dot{t}>0$, and $\pm\peqsub{\dot{x}}{0}=\peqsub{x}{0}\peqsub{\dot{t}}{0}>0$.  The light-like geodesics $\peqsub{\eta}{\pm}$ of $H^-$ would have the same expressions (we could replace $\peqsub{x}{0}$ by $-|\peqsub{x}{0}|$) with $\peqsub{\eta}{-}$ aiming also towards the apex.\vspace*{2ex}

    If we now consider the quotient over the whole plane, $Y_{\peqsub{\theta}{0}}=\R^2/G_{\peqsub{\theta}{0}}$, we do not obtain a smooth manifold as the origin is a fixed point, furthermore some other problems arise as we will see in section \ref{problems with the extension}. However the action over one of the semi-planes $S^\pm=\{T<\pm X\}$ is free and properly discontinuous, so the quotient is a smooth Hausdorff manifold which topologically is homeomorphic to two cones without the apex (one for each quadrant) and a circle corresponding to the semi-diagonal. The cones can be deformed into semi-cylinders and glued along the circle to obtain a whole cylinder. In particular, the metric in the \textbf{extended Misner space}  $C^+_{\peqsub{\theta}{0}}=S^+/G_{\peqsub{\theta}{0}}\cong \R \times \S^1$ can be written~as:\vspace*{\dist}

    \[ds_+^2=-z d\vartheta_+^2-2d\vartheta_+ dz\,,\quad (z,\vartheta_+)\in \R \times (\R/\mathrm{mod}\,2 \peqsub{\theta}{0})\]

    where the coordinates $(z,\vartheta_+)$ are related to those on $M^-_{\peqsub{\theta}{0}}$ and $H^+_{\peqsub{\theta}{0}}$ through\vspace*{\dist}

    \[ (z,\vartheta_+)=\left\{ \begin{array}{lcl} \left(-\frac{t^2}{4},2x+2\ln(-t)\right) & \textrm{ if  }&(t,x)\in M^-_{\peqsub{\theta}{0}}\\[1.5ex]
     \left(\frac{x'^2}{4},-2t'+2\ln(x')\right)&\textrm{ if  }&(t',x')\in H^+_{\peqsub{\theta}{0}} \end{array}\right. \]

    Notice that the light-like circle $\{z=0\}\subset C^+$ corresponds to $D_1/G_{\peqsub{\theta}{0}}$. On the other hand, the metric in the \textbf{extended Misner space}   $C^-_{\peqsub{\theta}{0}}=S^-/G_{\peqsub{\theta}{0}}\cong \R \times \S^1$ is:\vspace*{\dist}

    \[ds_-^2=-z d\vartheta_-^2+2d\vartheta_- dz\,,\quad (z,\vartheta_-)\in \R \times (\R/\mathrm{mod}\,2 \peqsub{\theta}{0})\]

    where\vspace*{\dist}

    \[ (z,\vartheta_-)=\left\{ \begin{array}{lcl} \left(-\frac{t^2}{4},2x-2\ln(-t)\right) &\textrm{ if  }&(t,x)\in M^-_{\peqsub{\theta}{0}} \\[1.5ex] \left(\frac{x'^2}{4},-2t'-2\ln(-x')\right)&\textrm{ if  }&(t',x')\in H^-_{\peqsub{\theta}{0}} \end{array}\right. \]

    In this case the light-like circle $\{z=0\}\subset C^-$ corresponds to $D_2/G_{\peqsub{\theta}{0}}$.\vspace*{2ex}

    From now on, let us focus on $C^+$. The light-like geodesics $\peqsub{\gamma}{\pm}$ can be expressed in the $\{z,\peqsub{\vartheta}{+}\}$~-~coordinates to give the future directed ($\dot{z}>0$) light-like geodesics $\peqsub{\gamma}{\pm}$ of $M^-$ passing through $\peqsub{z}{0}=-|\peqsub{z}{0}|<0$:\vspace*{\dist}

    \[\begin{array}{l}
        \widetilde{\gamma}_+(\tau)=\biggl(z(\tau),\vartheta_+(\tau)\biggr)=\biggr(\peqsub{\dot{z}}{0}\tau-|\peqsub{z}{0}|,\peqsub{\vartheta}{0}\biggr)\\[2ex]
        \widetilde{\gamma}_-(\tau)=\biggl(z(\tau),\vartheta_+(\tau)\biggr)
               =\biggr(\peqsub{\dot{z}}{0}\tau-|\peqsub{z}{0}|,\peqsub{\vartheta}{0}+2\ln\left(1-\frac{\peqsub{\dot{z}}{0}}{|\peqsub{z}{0}|}\tau\right)\biggr)
    \end{array}\qquad\tau\in\left(-\infty,\frac{|\peqsub{z}{0}|}{\peqsub{\dot{z}}{0}}\right)\]

    Similarly, the future directed light-like geodesics $\peqsub{\delta}{\pm}$ of $H^+$ passing through $\peqsub{z}{0}>0$ are:\vspace*{\dist}

    \[\begin{array}{lcl}
       \widetilde{\delta}_+(\tau)=\biggl(z(\tau),\vartheta_+(\tau)\biggr)=\biggr(\peqsub{\dot{z}}{0}\tau+\peqsub{z}{0},\peqsub{\vartheta}{0}\biggr) & \quad & \tau\in\left(-\dfrac{\peqsub{z}{0}}{\peqsub{\dot{z}}{0}},\infty\right)\\[2ex]
       \widetilde{\delta}_-(\tau)=\biggl(z(\tau),\vartheta_+(\tau)\biggr)=\biggr(\peqsub{z}{0}-|\peqsub{\dot{z}}{0}|\tau,\peqsub{\vartheta}{0}
       +2\ln\left(1-\frac{|\peqsub{\dot{z}}{0}|}{\peqsub{z}{0}}\tau\right)\biggr) & \quad & \tau\in\left(-\infty,\dfrac{\peqsub{z}{0}}{\peqsub{|\dot{z}}{0}|}\right)
    \end{array}\]

    where notice that now $sgn(\dot{z})=\pm1$. Clearly $\widetilde{\gamma}_+$ and $\widetilde{\delta}_+$ can be extended to $\tau\in\R$, in fact, they can be glued together through its limit point at the $z=0$ level. However the geodesics $\widetilde{\gamma}_-$ and $\widetilde{\delta}_-$ cannot be extended and remain incomplete. Notice that $\dot\vartheta_+\leq$ for all the geodesics.\vspace*{2ex}

    We see that we have partially solved our ``pathology'' as we have managed to ``unwrap'' $\widetilde{\gamma}_+$ and $\widetilde{\delta}_+$, but we have ``wrapped twice'' $\widetilde{\gamma}_-$ and $\widetilde{\delta}_-$. The $\peqsub{z}{0}=0$ case has to be studied separately:

    \begin{remark}\mbox{}\label{remark aumenta velocidad}\\
      The future directed light-like geodesics with $\peqsub{z}{0}=0$ are\quad $\left\{\begin{array}{ll}
       \widetilde{\rho}_+(\tau)= \biggr(\peqsub{\dot{z}}{0}\tau,\peqsub{\vartheta}{0}\biggr) & \tau\in\R    \\[2ex]
       \widetilde{\rho}_-(\tau)= \biggr(0,\peqsub{\vartheta}{0}+2\ln\left(1-\frac{\peqsub{\dot\vartheta}{0}}{2}\tau\right)\biggr)     & \tau\in\left(-\infty,\frac{2}{\peqsub{\dot\vartheta}{0}}\right)
    \end{array}\right.$\vspace*{2ex}

    $\widetilde{\rho}_+$ is exactly the complete $\widetilde{\gamma}_+$ and $\widetilde{\delta}_+$ glued through their limit point, while $\widetilde{\rho}_-$ is incomplete, remains always in the subset $\S^1\times\{0\}$ and verifies $\dot\vartheta_-<0$ (as well as $\widetilde{\gamma}_-$ and $\widetilde{\delta}_-$). Furthermore it verifies a quite astonishing property. For a given $\peqsub{\tau}{0}$ if we define a $\peqsub{\tau}{k}$ for every $k\in\Z$ such that:\vspace*{\dist}

    \[\left(1-\frac{\peqsub{\dot\vartheta}{0}}{2}\peqsub{\tau}{0}\right)=\left(1-\frac{\peqsub{\dot\vartheta}{0}}{2}\peqsub{\tau}{k}\right)e^{k\peqsub{\theta}{0}}\]

    then we have that for every $k\in \Z$:\vspace*{\dist}

    \[\widetilde{\rho}_+(\peqsub{\tau}{0})=\widetilde{\rho}_+(\peqsub{\tau}{k})\qquad\qquad \dot{\widetilde{\rho}}_+(\peqsub{\tau}{0})\neq\dot{\widetilde{\rho}}_+(\peqsub{\tau}{k})\]

    Hence the curve passes infinitely many times through the same point $\widetilde{\rho}_+(\peqsub{\tau}{0})$ but with ``longer'' tangent vector (but notice that its modulus remains constant and equal to zero!).
    \end{remark}

    For the $\{t,\vartheta_-\}$ extended coordinates we have analogous results, although now the roles of each pair of geodesics are exchanged, and of course, we have to consider $H^-$ and $\peqsub{\eta}{\pm}$ instead of $H^+$ and $\peqsub{\delta}{\pm}$ (see figure \ref{figure_cilindros}). Besides, this construction can be used to obtain an incomplete compact spacetime \cite[page 77]{JAVALOYES_SANCHEZ_lorentz} by gluing a positive slice $\{z=cte_+\}$ and a negative one $\{z=cte_-\}$ with a suitable deformation, obtaining a torus with the $\{z=0\}$ slice in it.\vspace*{2ex}

    We see then that we have two inequivalent inextensible extensions of $M^-$, however they both turn out to be again incomplete. Once at that point we should try to understand why this happens and if it is possible to find an even better coordinate system to solve the pathology completely. Notice that in order to extend the space-time $M^-$, we have followed the incomplete geodesics and we have defined a new coordinate to extend them, so one might wonder if we can do the same over the extensions $C^\pm$ and find where the wrapped geodesics go. It is not hard to believe that they go to the upper cone (isometric to the lower one under the mapping $t\mapsto-t$, see figure in section \ref{section further considerations}) but if we apply the same technique to solve the double wrapped geodesics, we will find that the already ``straight'' ones, become wrapped again. Actually, as we mentioned before, no further extension exists and here is where topology will turn out to be essential to understand why the pathological behaviour appears in the first place and why we can partially, but not completely, get rid off it.

                         \begin{figure}[h!]
     \centering
     \includegraphics[width=0.7\linewidth]{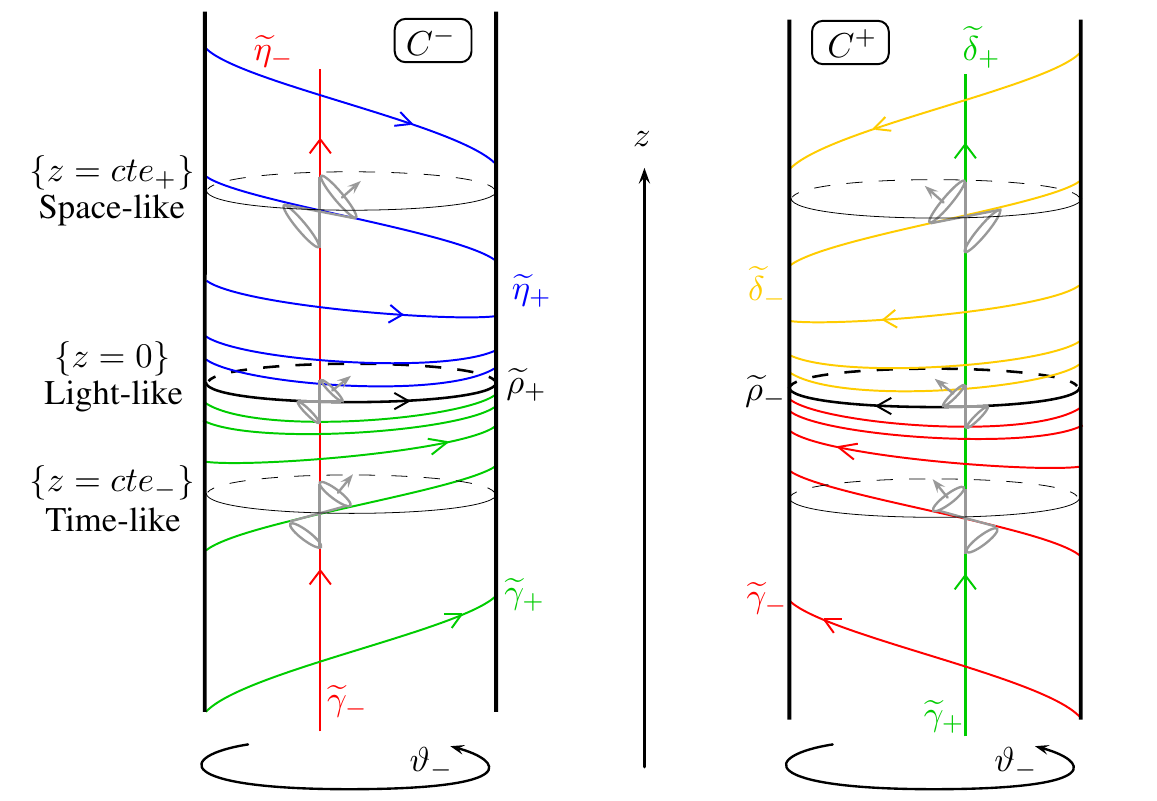}
     \caption{$C^-$ is the result of gluing the lower cone $M^-$ and the left one $H^-$ while $C^+$ is obtained by gluing $M^-$ and $H^+$. A similar picture can be found on page 172 of \cite{Haw_y_ellis} but notice that there, the cones involved are the lateral and the upper ones, hence the time-like and space-like sections are exchanged.\vspace*{-2ex}}\label{figure_cilindros}
    \end{figure}

    \subsection{Topology induced by the discrete hyperbolic rotation}{\label{Topology}}

    We will now provide a local base for the topology of the complete Misner space $\R^2_1/boost$. To this end we are going to consider some local base of $\R^2$ and saturate their open sets to obtain a base of the quotient topology (see \ref{recipe of quotient topology}).

       \begin{itemize}
          \item[\ovalbox{$P\in Q_i$}] In local hyperbolic coordinates $P\in Q_i\subset \R^2$ can be written as $(\peqsub{r}{0},\peqsub{\psi}{0})$. Let us define\vspace*{\dist}

                \[B_1(P,\varepsilon)=\left\{(r,\psi)\in Q_i\ :            r\in(\peqsub{r}{0}-\varepsilon,\peqsub{r}{0}+\varepsilon)\quad
                \psi\in(\peqsub{\psi}{0}-\varepsilon,\peqsub{\psi}{0}+\varepsilon)\right\}\]

            Using $P.1$-$P.4$, we have\vspace*{\dist}

            \[H_{n\peqsub{\theta}{0}}\left(\rule{0ex}{2.2ex}B_1(P,\varepsilon)\right)
            =B_1\left(\rule{0ex}{2.2ex}H_{n\peqsub{\theta}{0}}(P),\varepsilon\right)\]

            In order to saturate, we have to join all these possible images (see remark \ref{remark saturate f}):\vspace*{\dist}

            \[S_{\sim}\left[\rule{0ex}{2.2ex}B_1(P,\varepsilon)\right]=\bigcup_{n\in\Z} B_1\left(\rule{0ex}{2.2ex}H_{n\peqsub{\theta}{0}}(P),\varepsilon\right)\]

            Finally we consider the local base $\beta(P)=\{B_1(P,\varepsilon)\}_{\varepsilon\in(0,\peqsub{\varepsilon}{P})}$ for every $P\in Q_i$ with $\peqsub{\varepsilon}{P}>0$ small enough such that $B_1(P,\varepsilon)$ does not contain two related points (hence the union is disjoint). If we now look at the glued space, we have simply an open ``squared'' ball over the cone.

             \centerline{\includegraphics[width=1\linewidth]{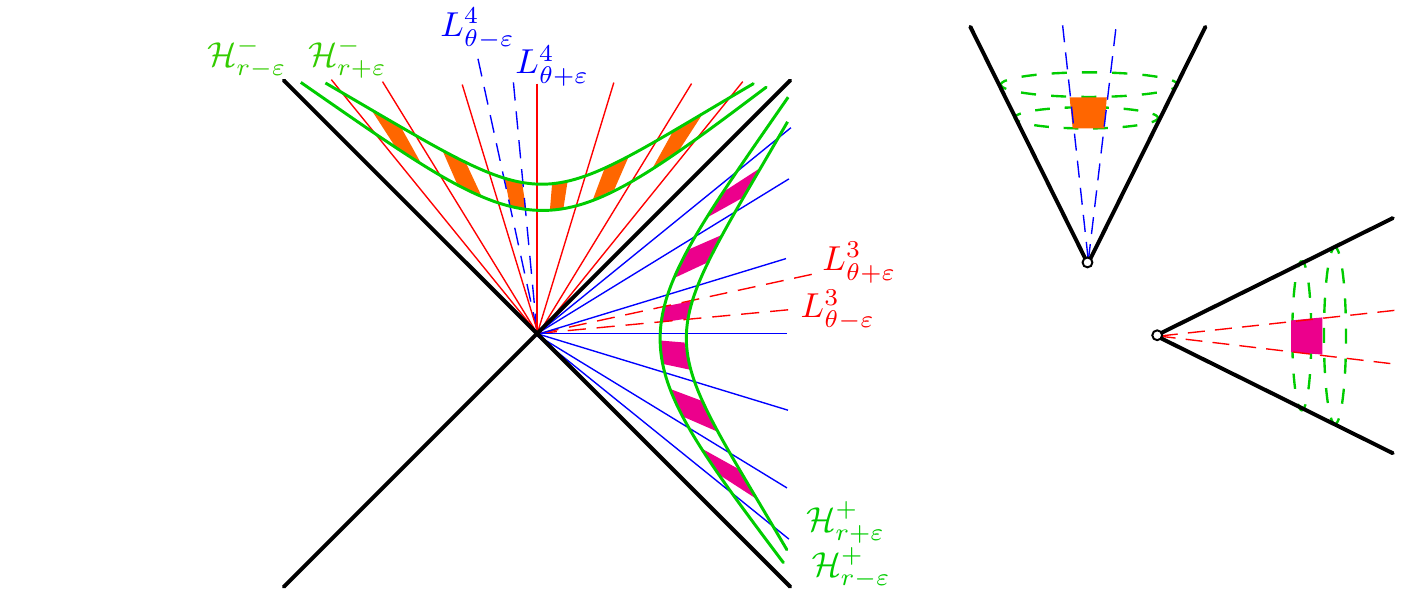}}

          \item[\ovalbox{$P\in D_i$}] Let $P=(\peqsub{T}{0},\peqsub{T}{0})\in D_3$ (analogously for the remaining $D_i$ just changing some signs), we define\vspace*{\dist}

              \[B_2(P,\varepsilon,\varepsilon')=
              \left(\bigcup_{\delta\in(-\varepsilon,\varepsilon)}l^-_{\peqsub{T}{0}+\delta}\right)
              \bigcap\left(\left[\bigcup_{r\in[0,\varepsilon')}\mathcal{H}^+_r\right]\bigcup
              \left[\bigcup_{r\in[0,\varepsilon')}\mathcal{H}^-_r\right]\right)\]

              Using $P.1$-$P.4$ we have $H_{n\peqsub{\theta}{0}}\left(\rule{0ex}{2.2ex}B_2(P,\varepsilon,\varepsilon')\right)
              =B_2\left(\rule{0ex}{2.2ex}H_{n\peqsub{\theta}{0}}(P),\varepsilon e^{n\peqsub{\theta}{0}},\varepsilon'\right)$ and then

              \[\hspace*{-3ex}
              S_{\sim}\left[\rule{0ex}{2.2ex}B_1(P,\varepsilon,\varepsilon')\right]=\bigcup_{n\in\Z} B_1\left(\rule{0ex}{2.2ex}H_{n\peqsub{\theta}{0}}(P),\varepsilon e^{n\peqsub{\theta}{0}},\varepsilon'\right)\]

              Finally we take the local base $\beta(P)=\{B_2(P,\varepsilon,\varepsilon')\}_{\varepsilon\in(0,\peqsub{\varepsilon}{P}),\varepsilon'>0}$ with $\peqsub{\varepsilon}{P}$ small enough such that $B_2(P,\varepsilon,\varepsilon')$ does not contain the origin and does not contain two related points.\vspace*{2ex}

              As the ball $B=B_2(P,\varepsilon,\varepsilon')$ is a bunch of (finite piece of) light-like geodesics each one having a piece over $Q_4$, a piece over $Q_2$, and a point over the diagonal $D_3$, hence over the glued space we have a bunch of geodesics each having a piece over $M^+$, a piece over $H^+$ and a point over $\widetilde{D}_3$. Therefore given a point $q\in \widetilde{D}_i$, any basic open neighbourhood $U$ over the glued space is an open interval over $\widetilde{D}_i$, together with two ``thickened'' geodesics (similar to $\peqsub{\widetilde{\gamma}}{\pm}$), each one turning infinitely many times around the corresponding adjacent cone, where the direction of rotation is determined by the evolution of the geodesics over the universal cover as we explain in section~\ref{section further considerations}.

              \centerline{\includegraphics[clip,trim=0.5cm 0.5cm 0.2cm -0.25cm,width=0.8\linewidth]{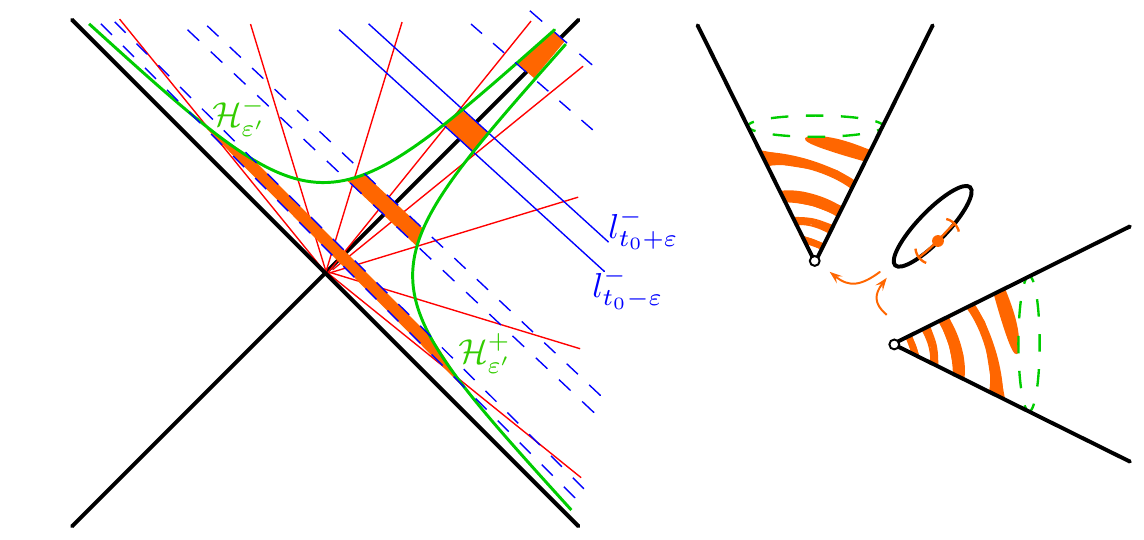}}

          \item[\ovalbox{$P=0$}] We define\vspace*{\dist}

          \[B_3(0,\varepsilon)=\left(\bigcup_{\delta\in(-\varepsilon,\varepsilon)}l^-_{\delta}\right)\bigcap\left(\bigcup_{\delta\in(-\varepsilon,\varepsilon)}l^+_{\delta}\right)\bigcap\left(\left[\bigcup_{r\in[0,\varepsilon e^{-\peqsub{\theta}{0}})}\mathcal{H}^+_r\right]\bigcup\left[\bigcup_{r\in[0,\varepsilon e^{-\peqsub{\theta}{0}})}\mathcal{H}^-_r\right]\right)\]

           Using again $P.1$-$P.4$ we obtain:\vspace*{\dist}

           \[H_{n\peqsub{\theta}{0}}\left(\rule{0ex}{2.2ex}B_3(0,\varepsilon)\right)\cup
             H_{-n\peqsub{\theta}{0}}\left(\rule{0ex}{2.2ex}B_3(0,\varepsilon)\right)=
             B_3\left(\rule{0ex}{2.2ex}0,\varepsilon e^{n\peqsub{\theta}{0}}\right)\]

           Saturating it, we obtain the region in between the four branches of the hyperbolas $\mathcal{H}^+_{\varepsilon e^{-\peqsub{\theta}{0}}}$ and $\mathcal{H}^-_{\varepsilon e^{-\peqsub{\theta}{0}}}$ (pushing the lines to infinity):\vspace*{\dist}

           \[S_{\sim}\left[\rule{0ex}{2.2ex}B_3(0,\varepsilon)\right]=\left[\bigcup_{r\in[0,\varepsilon e^{-\peqsub{\theta}{0}})}\mathcal{H}^+_r\right]\bigcup\left[\bigcup_{r\in[0,\varepsilon e^{-\peqsub{\theta}{0}})}\mathcal{H}^-_r\right]\]

           Each branch of the hyperbola correspond to a ``straight'' circle  over each cone ($\{t=cte\}$ in the Misner space). Taking all the hyperbolas with $r\in(0,\varepsilon e^{-\peqsub{\theta}{0}})$ gives simply an annulus around the apex on each cone. The $r=0$ case correspond to the diagonals, which in the glued space are the four circles plus the conic center. Hence a basic open neighbourhood of the origin is the point itself, the four circles and four annulus of ``length'' $\varepsilon e^{-\peqsub{\theta}{0}}$ on each cone.\vspace*{1.5ex}

           \centerline{\includegraphics[clip,trim=1.2cm 0.86cm 0.2cm 0.9cm,width=0.9\linewidth]{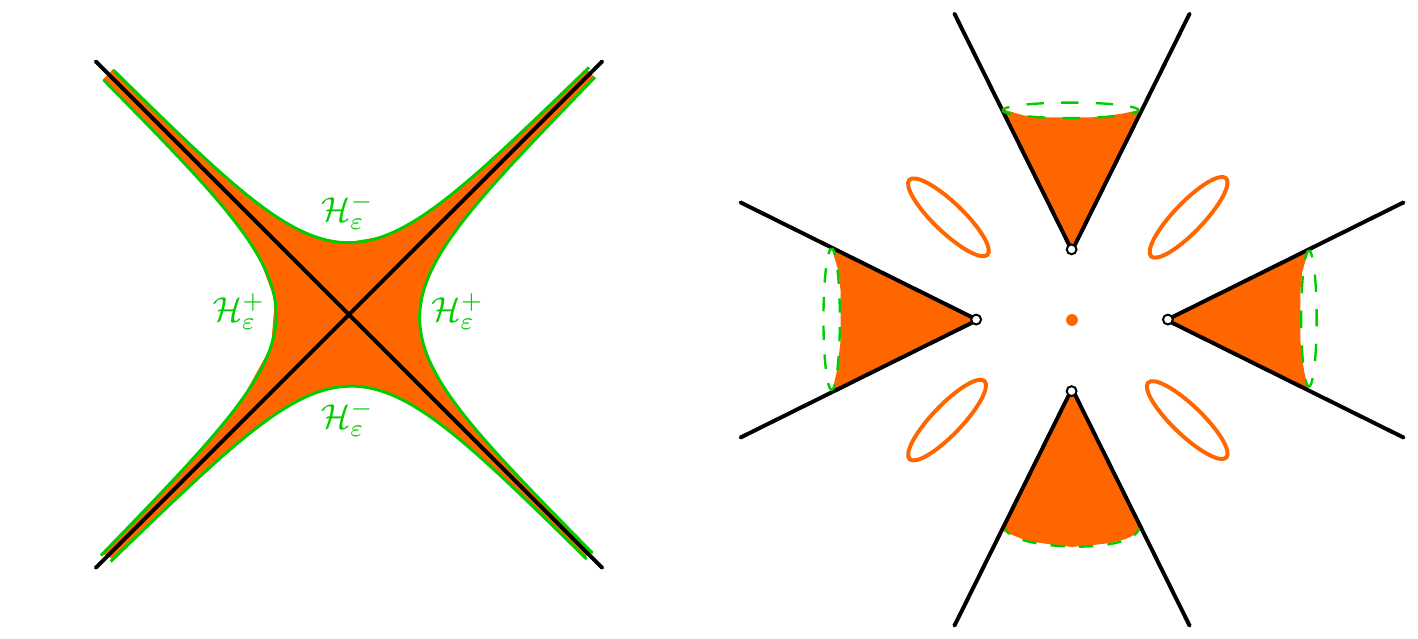}}
       \end{itemize}

       \subsection{Topological properties of the extended quotient}\label{topological properties}

          As every point has a countable base, the space $Y_{\peqsub{\theta}{0}}=\R^2/G_{\peqsub{\theta}{0}}$ is a first-countable space, indeed we can consider just the points with rational coordinates and we obtain that the space is in fact second-countable. When we restrict the action to one of the semi-planes $S^\pm=\{T<\pm X\}$ we obtain a smooth Hausdorff manifold, but whenever we extend the action including two adjacent semi-diagonals, we have that it is no longer Hausdorff, as can be seen taking (saturated) basic open sets $U\in\beta(x)$ and $V\in\beta(y)$ with $x\in D_4$ and $y\in D_3$. If we consider two points of the same diagonal without the origin, we can choose small enough neighbourhoods such that they do not overlap.

           \centerline{\includegraphics[width=0.8\linewidth,clip,trim=0cm 0cm 1.2cm -0.3cm]{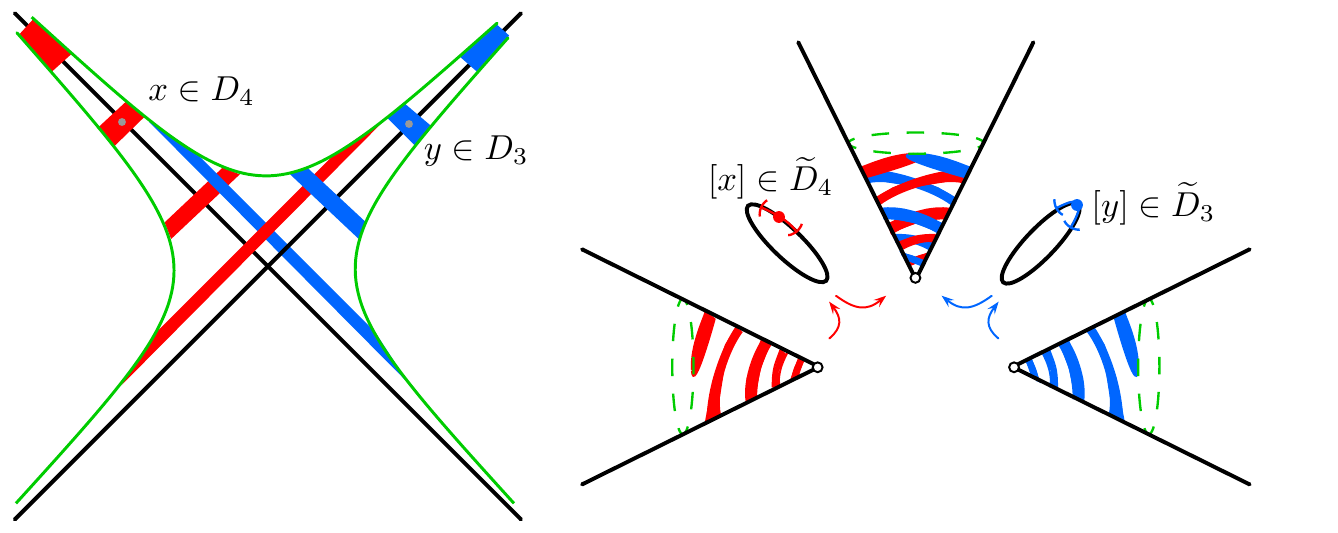}}

          Moreover, as we see on the last image of the previous section, the origin cannot be separated from any point of the circles $\widetilde{D}_i$ as they are contained in any open neighbourhood of the origin. Let us then study which separation axioms does the quotient spaces verify (see section \ref{subsection hausdorff}):

         \needspace{\baselineskip}
          \begin{proposition}\mbox{}
            \begin{enumerate}
             \item The quotient space $Y_{\peqsub{\theta}{0}}=\R^2/G_{\peqsub{\theta}{0}}$ is T0 but not T1.
             \item The quotient space $Y^*_{\peqsub{\theta}{0}}=(\R^2\setminus\{0\})/G_{\peqsub{\theta}{0}}$ is T1 but not T2.
             \item The quotient spaces $C^\pm_{\peqsub{\theta}{0}}=\{T<\pm X\}/G_{\peqsub{\theta}{0}}$ are T2.
             \item Every $x\in D_i$ (analogously of $\widetilde{D}_i$) is not closed over $Y_{\peqsub{\theta}{0}}$ but it is closed over $Y_{\peqsub{\theta}{0}}^*$.
            \end{enumerate}
          \end{proposition}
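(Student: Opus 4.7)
The plan is to reason directly from the explicit bases $\beta(P)$ and their saturations computed in Section \ref{Topology}, exploiting two features recorded there: first, every saturated basic neighborhood of the origin equals $\bigcup_{r<\varepsilon e^{-\theta_0}}(\mathcal{H}^+_r\cup\mathcal{H}^-_r)$ and in particular contains all four semi-diagonals $D_i$; second, for any $P\ne 0$ the parameters $\varepsilon,\varepsilon'$ can be shrunk so that $B_1(P,\varepsilon)$ or $B_2(P,\varepsilon,\varepsilon')$ avoids the origin, any prescribed discrete set in its complement, and is contained in the closure of either one quadrant $Q_i$ or of the two quadrants flanking a single semi-diagonal.

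For part~(1), T$_0$ of $Y_{\theta_0}$ is a routine consequence of the second fact: given $[P]\ne[Q]$ one shrinks $\beta([P])$ until its saturation omits $Q$. Failure of T$_1$ and the first half of~(4) are witnessed simultaneously by taking $P\in D_i$: the first fact forces $[0]\in\overline{\{[P]\}}$, so $\{[P]\}$ is not closed. For part~(2), T$_1$ of $Y^*_{\theta_0}$ reduces to producing, for every $[P]\ne[Q]$ with both classes nonzero, a basic open of $[Q]$ disjoint from the $G$-orbit of $P$. The only delicate subcase is $Q\in D_j$ and $P\in Q_i$ with $D_j\subset\overline{Q_i}$: by property~P.4 the orbit $G\cdot P$ sits on a fixed hyperbola $\mathcal{H}^\pm_r$ with $r>0$, so choosing $\varepsilon'<r$ forces $B_2(Q,\varepsilon,\varepsilon')$ (and every $G$-translate of it) to miss the orbit entirely; shrinking $\varepsilon$ handles the orbit inside the bounded slab. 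T$_1$ then immediately yields the second half of~(4).

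The main obstacle is the failure of T$_2$ for $Y^*_{\theta_0}$. The candidates are two points $P=(T_0,T_0)\in D_i$ and $P'=(T_0',-T_0')\in D_j$ lying on distinct diagonals that bound a common quadrant. Writing $B_2(P,\varepsilon_1,\varepsilon_1')$ as $\{T+X\in(2T_0-2\varepsilon_1,2T_0+2\varepsilon_1)\}\cap\{|T^2-X^2|<(\varepsilon_1')^2\}$, its saturation is the union over $n\in\Z$ of the rescaled slabs $e^{n\theta_0}(2T_0-2\varepsilon_1,2T_0+2\varepsilon_1)$, still cut by the invariant hyperbolic band; an analogous description holds for $P'$ with $T-X$ in place of $T+X$. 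Combining the $n=0$ slab of $P$ with the $n=m$ slab of $P'$, the explicit point with $T+X=2T_0$ and $T-X=2T_0'e^{-m\theta_0}$ satisfies $|T^2-X^2|=4T_0T_0'e^{-m\theta_0}$, which falls below $\min\bigl((\varepsilon_1')^2,(\varepsilon_2')^2\bigr)$ as soon as $m$ is large enough. Hence the saturated neighborhoods of $[P]$ and $[P']$ always meet in the common quadrant, proving $Y^*_{\theta_0}$ is not T$_2$.

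For part~(3), the same slab bookkeeping proves T$_2$ for $C^\pm_{\theta_0}$: the fundamental region $S^\pm$ contains only \emph{one} semi-diagonal, so the adjacent-diagonal obstruction just used disappears. Two quadrant classes are separated by $B_1$-balls; a quadrant class and a diagonal class by the hyperbola argument from~(2); and two distinct classes on the single included $D_i$ by shrinking $\varepsilon_1,\varepsilon_2$ so much that the rescaled slabs $e^{n\theta_0}(2T_0\pm 2\varepsilon_1)$ and $e^{m\theta_0}(2T_0''\pm 2\varepsilon_2)$ never overlap --- possible precisely because $[P]\ne[P']$ forces $T_0''/T_0\notin\{e^{n\theta_0}:n\in\Z\}$. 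Alternatively, one can simply invoke that the action of $G_{\theta_0}$ on $S^\pm$ is free and properly discontinuous, as already recorded in the preceding subsection, which automatically yields a Hausdorff quotient.
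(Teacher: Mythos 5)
Your proposal is correct and follows essentially the same route as the paper: the paper likewise separates points via the explicit saturated bases of Section \ref{Topology} (non-T2 from overlapping saturated $B_2$-neighbourhoods of points on adjacent semi-diagonals, non-T1 from the fact that every saturated neighbourhood of the origin contains all four semi-diagonals, T2 of $C^\pm$ from the free properly discontinuous action, and part (4) from the orbit of a diagonal point accumulating at the origin in the universal cover). You merely make explicit, in null coordinates $T\pm X$, the intersection computation that the paper delegates to its figures, which is a welcome but not essentially different elaboration.
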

          \begin{proof}\mbox{}\\
                The only tricky statement is the last one. First notice that removing the origin we have a T1 space, so we might expect that the origin fails to be closed, but surprisingly it is closed. However for every $x\in \widetilde{D}_i$ we have that $\{x\}^c$ is not open. If we regard the universal cover, we see that the problem lies on the fact that the equivalence class $[x]$ of $x\in D_i$ is a countable set that accumulates over the origin but that does not contain it, so $[x]$ is not closed in $\R^2$ (but it is closed in $\R^2\setminus\{0\}$, where it has no accumulation point). On the other hand, the equivalence class of the origin is just a point $[0]=\{0\}$ which is closed over the plane (this is a general result for quotient topologies).
          \end{proof}

       \subsection{Problems with the extension of the group action}\label{problems with the extension}
          The statements made in the preceding sections allow us to know why the action group does not work nicely over some regions (see definition \refconchap{definition properly discont} on appendix  \ref{subsection actions}). The action over the whole plane without the origin $\R^2\setminus\{0\}$ is free and verifies $PD1$, but it fails to satisfy $PD2$. Here we see perfectly why the apparently weird property $PD2$ is required in order to ensure that the resultant manifold is Hausdorff. If we consider the action over the whole plane, we will not obtain a smooth manifold as the origin is a fixed point of $H_{\peqsub{\theta}{0}}$, in fact the action is neither free nor $PD2$, and hence we will obtain a non-smooth non-Hausdorff manifold (known as non-Hausdorff orbifold).

          \begin{remark}\mbox{}\\
            The continuation over $M^+$ of any geodesic of $M^-$ going through the origin, is not univocally determined as it can be broken at the origin. In fact two different geodesics in $M^-$ may merge into one over $M^+$. Despite this pathology, we have a natural way of assigning the continuation, namely, following the straight line over the universal cover.
          \end{remark}

   \section{Further Considerations about the whole Misner Space}\label{section further considerations}
            Some behaviours of the Misner space can be illustrated in the following figure (compare with fig. \ref{figure_cilindros}).

            \centerline{\includegraphics[width=1\linewidth]{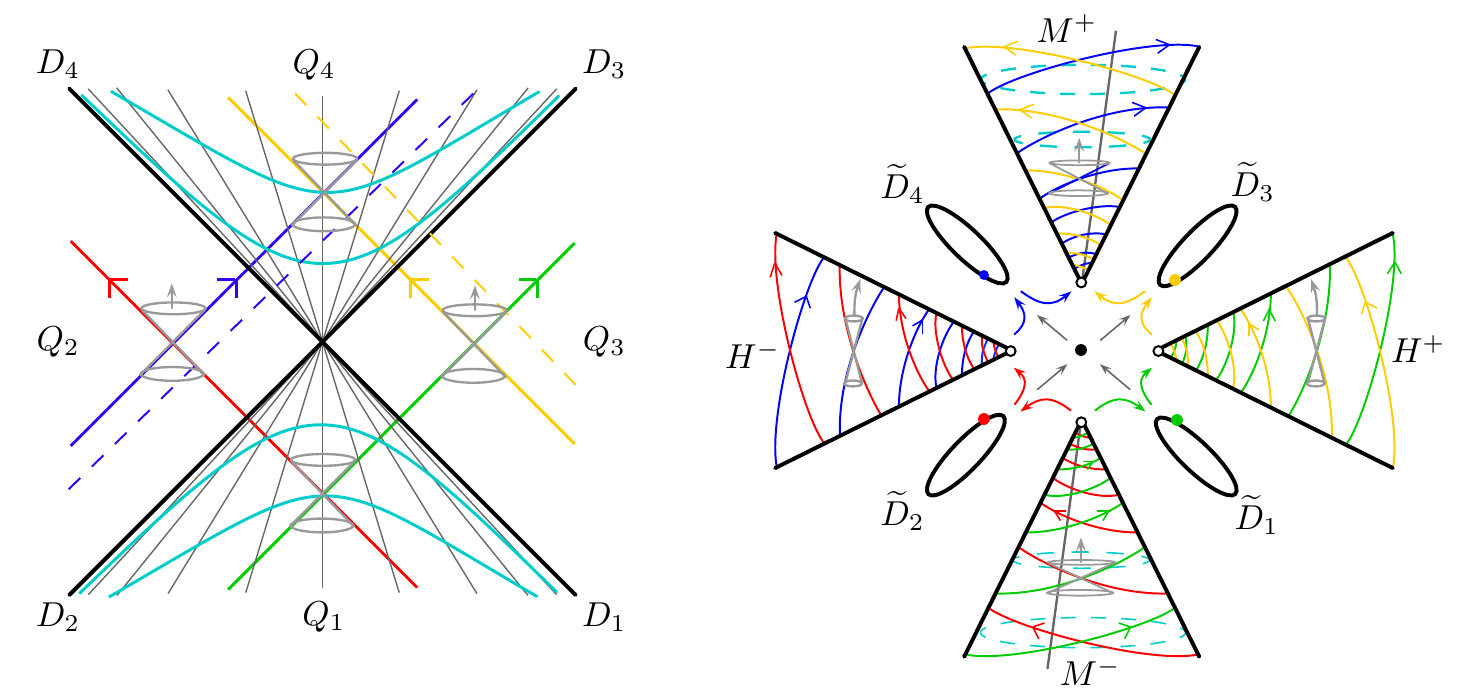}\label{FIG_geodesics_Misner_completo}}

            In this figure we see, for instance, why the character of the coordinates $(z,\peqsub{\vartheta}{\pm})$ is exchanged depending on the sign of the $z$-coordinate in the extended Misner space $C^\pm$. It is also important to notice that the spinning happens in the specific sense it does depending on the direction of the corresponding geodesic over the universal cover.\vspace*{2ex}

            Summarizing, if we reach the apex turning infinitely many times, we will ``jump'' to the adjacent circle (the right one for the clockwise and the left one for the counterclockwise with respect to the origin), touch it at exactly one point and ``jump'' again to the next quadrant, according to the rules described by the arrows shown in the previous figure. The other possibility is reaching the apex without turning infinitely many times, which means that over the universal cover we have not crossed any semi-diagonal, hence the geodesic crosses through the origin. The time-like geodesics through the origin must go from $Q_1$ to $Q_4$. Over the glued space they are world lines that reach the apex of $M^-$ without turning infinitely many times, ``jump'' to the origin and ``jump'' again to $M^+$. The light-like geodesics through the origin are precisely the semi-diagonals $D_i$. Over the glued space these geodesics turn on one of the lower circles $\widetilde{D}_1$ or $\widetilde{D}_2$ infinitely many times but with a finite affine parameter, then ``jump'' to the origin and finally ``jump'' again to the opposite circle as the arrows of the figure suggest. The quotations in the word jump come from the fact that according to the topology, no jump exists (such curves are of the form $\widetilde{\gamma}=p\circ\gamma$, a composition of continuous functions -see def. \ref{def quotient topology}-).\vspace*{2ex}

        It is worth mentioning that quite often this space is not depicted completely right \cite{durin2006closed,hikida2005d,jonsson2005visualizing} as the circles $\widetilde{D}_i$ are missing. Probably this lack of precision is not important for many purposes, but it is of capital importance if we want to understand in detail the Misner space.

         \begin{remark}\mbox{}\\
          A uniformly accelerated observer over the lateral quadrants $Q_2$ or $Q_3$ follows a hyperbola (through translation we may consider that it has the semi-diagonals as its asymptotes). As this kind of hyperbolas over those quadrants are the time-like circles $\S^1\times\{x=cte\}$ over the lateral cones, those observers describe closed time-like curves over the glued space.
         \end{remark}

        \centerline{\includegraphics[width=0.9\linewidth,clip,trim=2.2cm. 0.2cm 0.1cm -0.2cm,page=1]{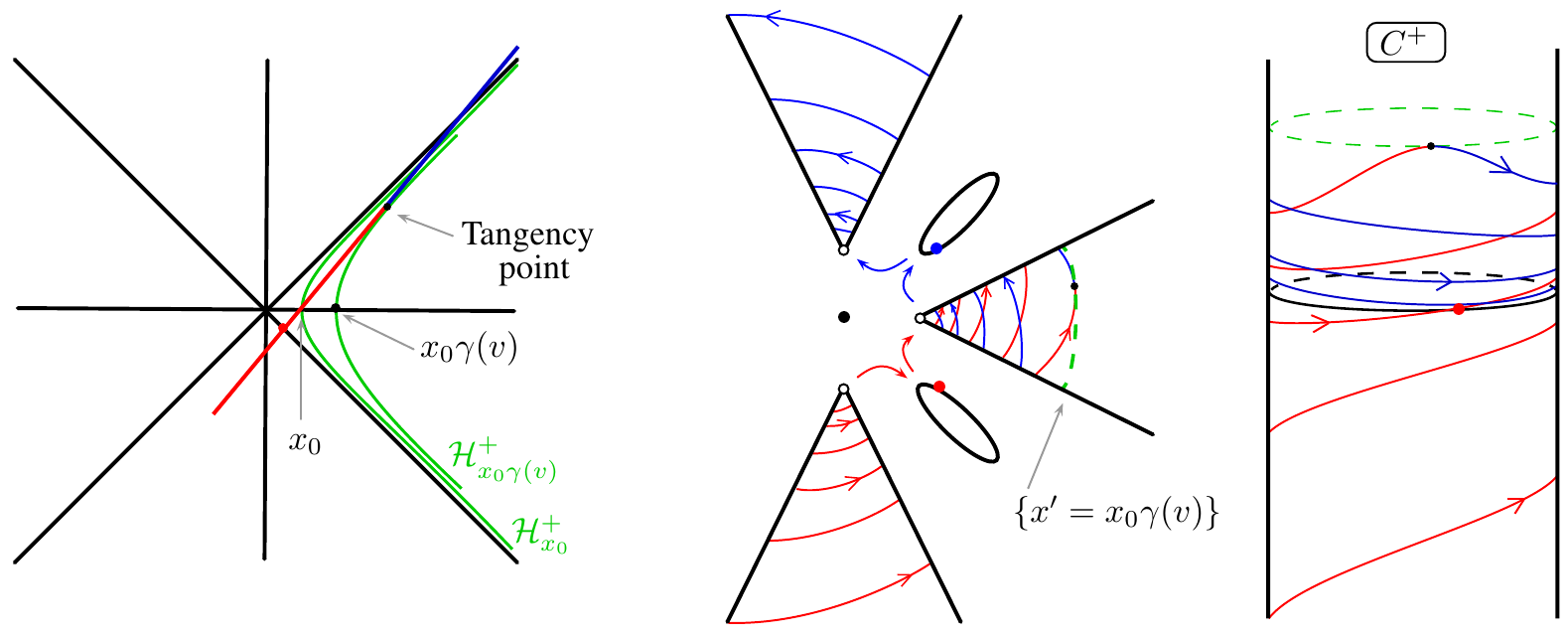}}

         Any inertial observer over the lateral cones will eventually jump to the upper cone in the same way that any observer in the Minkowski space-time that does not pass through the origin will eventually reach the semi-diagonals $\{T=|X|\}$, hence the only way that an observer can remain on the lateral cylinders is by experiencing a perpetual acceleration. Besides notice that all the time-like geodesics that do not cross the origin will be tangent to some hyperbola $\mathcal{H}_r$ over the lateral quadrants (the slope of these hyperbolas tend to $\pm1$). Computing the intersection between the geodesic and a generic hyperbola $\mathcal{H}_r$, and imposing the tangency condition, gives that the maximum hyperbolic radius (and hence the ``lateral height'') attained in the lateral cones is:\vspace*{\dist}

         \[\peqsub{x}{\mathrm{max}}=\frac{\peqsub{x}{0}}{\sqrt{1-v^2}}=\peqsub{x}{0}\gamma(v)\qquad \]

         Over the upper and lower quadrants, it crosses every possible hyperbola and hence over the cones it goes from $t\rightarrow -\infty$ and goes to $t\rightarrow +\infty$ (where the $t$ coordinate is defined separately over each cone). It is interesting also to describe what would happen over the extended space $C^+$: the time-like geodesic will turn finitely many times around the lower semi-cylinder, cross the $\{z=0\}$ level set, turn finitely many times until the maximum of the $z$ coordinate is reached. From this point on, the geodesic falls again towards $\{z=0\}$ but now turning infinitely many times. Notice in particular that the geodesic is incomplete as $C^+$ is formed gluing $M^-$ and $H^+$ through the circle $\widetilde{D}_1$, but not the upper cone $M^+$.

    \section{\texorpdfstring{{\boldmath$g$}}{g}-boundary}\label{g-boudnary}

       The results presented in the previous sections show that some parts of the extended Misner space, namely the circles $\widetilde{D}_i$  and the origin, behave somehow like boundaries of the ``adjacent cones''  in the sense that every incomplete geodesic of the cones has a limit point that is over the circles or in the origin. The behaviour of these sets resembles that of the $g$-boundary introduced by Hawking \cite{Sing_geom_Haw_1966} and Geroch \cite{geroch1968local}. Therefore it is worth to compute the $g$-boundary of $M^- $ and see if, as expected, we recover the two adjacent circles and the origin obtained with the quotient topology.\vspace*{2ex}

      The construction of the  $g$-boundary provides a way to build, and glue properly, a boundary $\partial_g M$ to an incomplete semiriemannian manifold $(M,g)$. The idea is quite tricky as this completion has to be done just making reference to the manifold itself to define something that will be ``outside'' of it. It is important to notice that the same manifold space $M$ can have different $g$-boundaries (end-points of the incomplete geodesics regarded in the ambient space) depending on the metric. For instance consider the unitary disk $\D\subset\R^2$:
      \begin{itemize}
         \item If we use the Euclidean metric, all the geodesics are incomplete and the boundary $\partial_g\D$ is $\S^1$.
         \item If we consider it to be the Poincaré's disk with the hyperbolic metric, then all the geodesics are complete and hence the boundary $\partial_g\D$ is empty.
         \item If we consider the topological disk as the whole sphere without the north pole $\S^2\setminus\{\peqsub{p}{N}\}\subset\R^3$ with the round metric, we see that the boundary $\partial_g(\S^2\setminus\{\peqsub{p}{N}\})$ is just a point. Now $\S^2\setminus\{\peqsub{p}{N}\}$ and the round metric can be pulled-back to the unitary disk through the stereographic projection $M\rightarrow \R^2$ followed by a contraction $\R^2\rightarrow \D$. Hence, with this particular metric, $\partial_g\D$ is just one point.
      \end{itemize}

      Now we proceed with a quick review of Geroch's method to build the $g$-boundary.

      \subsection{Short review of the construction of the \texorpdfstring{{\boldmath$g$}}{g}-boundary}
        Let $(M,g)$ be a semiriemannian manifold and $G=TM\setminus\{0\}=\{(p,v)\in TM\ /\ v\neq0\}$. For any $(p,v)$ there exists a unique maximal geodesic $\peqsub{\gamma}{(p,v)}:I\rightarrow M$ such that $\peqsub{\gamma}{(p,v)}(0)=p$ and $\peqsub{\dot\gamma}{(p,v)}(0)=v$. By geodesic we understand a standard geodesic (not just pregeodesic) defined over $I=[0,b)$ with $b\in\R^+\cup\{\infty\}$. With this notation, the complete geodesics are the ``forward complete'' geodesics. The reason for doing this is that we can then bijectively associate the geodesics with the reduced tangent bundle $G$. We now define the function:\vspace*{\dist}

        \[\varphi:G\longrightarrow \R^+\cup\{\infty\}\]

        such that $\varphi(p,v)$ is the total affine length (in the forward direction) of the corresponding geodesic $\peqsub{\gamma}{(p,v)}$. Clearly $\varphi$ is infinite if and only if the geodesic in question is complete. We now define the sets:\vspace*{\dist}

        \[\peqsub{G}{\!I}=\{x\in G\ /\ \varphi(x)<\infty\}\qquad\quad\begin{array}{l}
          H=G\times\R^+\\
          H_+=\{(p,v,\tau)\in H\ /\ \tau<\varphi(p,v)\}
        \end{array}\]

        $\peqsub{G}{\!I}$ is formed by the incomplete geodesics, $H$ is the set of all possible geodesics and all possible affine parameters, while $H_+$ restricts the possible affine parameters to those ones where the geodesic is well defined. Therefore we have a well defined map:\vspace*{\dist}

        \[\begin{array}{cccc}
           \Psi:&     H_+    & \longrightarrow & M\\
                & (p,v,\tau) &    \longmapsto  & \peqsub{\gamma}{(p,v)}(\tau)
        \end{array}\]

        We now topologize the set $\peqsub{G}{\!I}$. For a given open set $U\subset M$, we define the subset of $\peqsub{G}{\!I}$:\vspace*{\dist}

        \[S(U)=\left\{(p,v)\in \peqsub{G}{\!I}\ /\ \text{there exists } A\in \peqsub{N}{\!H}(p,v,\varphi(p,v))\ \text{with}\ \Psi(A\cap H_+)\subset U\rule{0ex}{2.5ex}\right\}\]

        where $\peqsub{N}{\!H}(x)=\{W\subset H\ /\ \text{open with } x\in W\}$ is the set of open neighbourhoods of $x$ in $H$. The idea behind the definition of $S(U)$ is that if $U$ is ``attached to the boundary of $M$''\footnote{The quotes recall that there exists yet no boundary! However, if we have in mind the examples of the beginning of this section, the quoted ideas work pretty well.}, then $S(U)$ is formed by the geodesics $\gamma=(p,v)$ such that $\gamma$ itself and all close enough geodesics finish their tour on $U$ (and hence close to the ``boundary of $M$''). The next proposition gathers some important properties whose proof can be found in \cite{geroch1968local}:

         \needspace{\baselineskip}
        \begin{properties}\mbox{}
            \begin{enumerate}\renewcommand{\labelenumi}{\thesection.\arabic{enumi}.}
                \setcounter{enumi}{\value{theorem}}
                \item $S(M)=\peqsub{G}{\!I}$
                \item $S(U_1)\cap S(U_2)=S(U_1\cap U_2)$\label{remark interseccion base topology}
                \item If $U$ has compact closure and is sufficiently small, then $S(U)=\emptyset$ i.e.  if $U$ is not ``attached to the boundary'' then the ``end'' of the incomplete geodesics lies outside $U$.\label{S(U)=vacio}
            \setcounter{theorem}{\value{enumi}}
            \end{enumerate}
        \end{properties}

        The first two properties imply that $\beta=\{S(U)\ /\ U \text{ open in }M\}$ is a base of a topology over $\peqsub{G}{\!I}$ (the one formed with all possible unions that we denote $\mathcal{T}_\beta$). The topology $\mathcal{T}_\beta$  allow us to define the following equivalence relation:

        \begin{definitions}\mbox{}\renewcommand{\labelenumi}{\thesection.\arabic{enumi}}
            \begin{enumerate}
                \setcounter{enumi}{\value{theorem}}
                \item Two points $\gamma_1,\gamma_2\in \peqsub{G}{\!I}$ are equivalent ($\gamma_1\sim\gamma_2$) if for every $U_1\in \peqsub{N}{\mathcal{T}_\beta}(\gamma_1)$ we have $\gamma_2\in U_1$ and for every $U_2\in \peqsub{N}{\mathcal{T}_\beta}(\gamma_2)$ we have $\gamma_1\in U_2$.
                \item The set $\partial$ of all equivalence classes $[\gamma]$, with the quotient topology over $\peqsub{G}{\!I}$, is called {\boldmath$g$}\textbf{-boundary}:\vspace*{\dist}

                      \[\partial=\{[\gamma]\ /\ \gamma\in \peqsub{G}{\!I}\}\]

                     We will denote by $\pi$ the quotient map $\pi:\peqsub{G}{\!I}\rightarrow\partial$ such that $\pi(\gamma)=[\gamma]$.
                \item We define the \textbf{completed manifold} $\widehat{M}=M\sqcup \partial$ (where we use $\sqcup$ to remark that $\partial$ is an abstract set formed by equivalence classes, so $A\sqcup B$ will mean $A\subset M$ and $B\subset \partial$).
                \item A subset $U\sqcup\Gamma$ is said to be \textbf{open} in $\widehat{M}$ if $U\subset M$ and $\Gamma\subset \partial$ are open sets in $M$ and $\partial$ respectively, and $\pi^{-1}(\Gamma)\subset S(U)$.
            \setcounter{theorem}{\value{enumi}}
            \end{enumerate}
        \end{definitions}

        The equivalence class relates geodesics that intuitively have the same ``end-point'', and hence $\partial$ is somehow the set of end-points of the incomplete geodesics. The last definition (which indeed defines a base of a topology) tells us how the abstract boundary $\partial$ is attached to the original space $M$, it demands the open set $\Gamma$ to be formed by end-points of incomplete geodesics that get into $U$ and remain there until the end of their parameters.

      \subsection{\texorpdfstring{{\boldmath$g$}}{g}-boundary of the Misner space}
       In this section we will apply the  construction of the $g$-boundary to the Misner space $M^-$ and study how the resulting topology is related with the quotient topology obtained in Section \ref{Topology}. Notice that, up to some signs, we can work on any cone, so to simplify the notation we will consider from now on $M^+$. Actually we are going to work in the Minkowski upper quadrant $Q_4$ with $(t,x)$ coordinates and prove that its  $g$-boundary is, precisely, $\partial Q_4=\{(|x|,x)\ /\ x\in\R\}$. The proof considering the identification goes with slightly change as we will see. The idea of the proof relies on the fact that we are working with some coordinates that cover not only $Q_4$ but the whole $\R^2$. This allows us to ``give explicit coordinates'' to the end points.\vspace*{2ex}

    \begin{minipage}{1\linewidth}
        \begin{wrapfigure}{r}{0.35\textwidth}\vspace*{-3.5ex}
          \centerline{\includegraphics[width=0.35\textwidth]{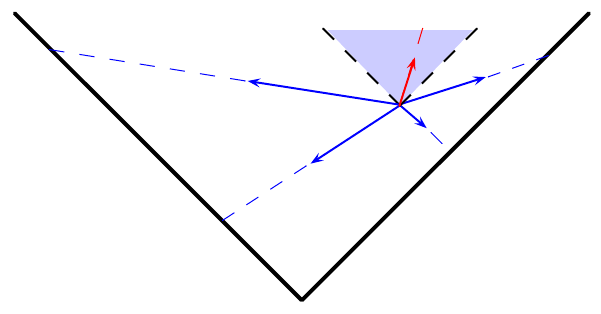}}
        \end{wrapfigure}
        The geodesics of the Minkowski space-time are straight-lines $\peqsub{\gamma}{(p,v)}(\tau)=p+\tau v$ and, in $Q_4$, the incomplete ones are those that hit the semidiagonals $\mathcal{D}=\{(|x|,x)\ /\ x\in\R\}$ when moving forward. The picture on the right shows that a geodesic is incomplete if and only if its velocity vector $v$ points towards $\mathcal{D}$ i.e. $v\notin \overline{Q_4}$ (remember that we consider only what happens for positive values of the parameter). Therefore:\vspace*{\dist}

        \[\peqsub{G}{\!I}=\{(p,v)\in\R^4\ /\ p\in Q_4\text{ and } v\notin\overline{Q_4}\}=Q_4\times\left(\, \overline{Q_4}\, \right)^c\]
    \end{minipage}
\mbox{}\\

        Now for a given initial data $(p,v)=((t,x),(\peqsub{v}{t},\peqsub{v}{x}))\in\peqsub{G}{\!I}$ we want to obtain $\varphi(p,v)$ the length of the parameter and also $\xi(p,v)$, the $x$-coordinate of the hit point. Studying the different possibilities (when it hits the right semidiagonal, the left one or the origin) it is not hard to obtain:\vspace*{\dist}

        \[\begin{array}{ccc}
        \begin{array}{cccc}
            \varphi: & \peqsub{G}{\!I}     & \longrightarrow & \R^+\\
                     & (p,v) &   \longmapsto   & \dfrac{t-\varepsilon x}{\varepsilon \peqsub{v}{x}-\peqsub{v}{t}}
        \end{array} & \qquad\qquad &
        \begin{array}{cccc}
            \xi: & \peqsub{G}{\!I}     & \longrightarrow & \R\\
                 & (p,v) &   \longmapsto   & \dfrac{t\peqsub{v}{x}- x  \peqsub{v}{t}}{\varepsilon \peqsub{v}{x}-\peqsub{v}{t}}
        \end{array}
        \end{array}\]

        where $\varepsilon:=\mathrm{sign}(t \peqsub{v}{x}-x  \peqsub{v}{t})\in\{-1,0,1\}$ tells us where the geodesic hits: $-1$ for the left semidiagonal, $1$ for the right one and $0$ for the origin. Notice that the denominator never vanishes as $v\notin\overline{Q}_4$.\vspace*{2ex}

        At this point we have to compute the sets $S(U)$ for any given open $U$, but it will be enough to focus on the boxes of $Q_4$ of side $2s$ around a point $(t,x)\in\R^2$ i.e. $B((t,x),s):=[I(t,s)\times I(x,s)]\cap Q_4$ where $I(x,\delta)=(x-\delta,x+\delta)$. According to property \refconchap{S(U)=vacio}, if we consider a point $p=(t,x)\notin\mathcal{D}$, then $S(B(p,s))=\emptyset$ for $s$ small enough. For bigger $s$, $S(B(p,s))$ turns out to be the same as if we consider the balls $B(p',s')$ and $B(p'',s'')$ where $p'=(x',x')$ and $p''=(-x'',x'')$ are the projections of $p$ over the semidiagonals $\mathcal{D}$ and $s',s''$ (possibly zero) are given by the intersection of $B(p,s)$ with $\mathcal{D}$. Hence we must focus on points $p=(|x|,x)$ and we will simply denote $B(x,s)=B((|x|,x),s)=I(|x|,s)\times I(x,s)$.\vspace*{2ex}

        The following lemma allows us to control the affine parameter $\varphi$ of two close incomplete geodesics.

        \begin{lemma}\label{lemma varphi-varphi'}\mbox{}\\
           Let $(p,v)\in\peqsub{G}{\!I}$ with associated $\varphi=\varphi(p,v)$ and let us consider another incomplete geodesic $(p',v')\in\peqsub{G}{\!I}$ with $\varphi'=\varphi(p',v')$ such that $|x-x'|<\delta$, $|t-t'|<\delta$, $|v_x-v_x'|<\delta$ and $|v_t-v_t'|<\delta$ for some $\delta>0$ small, then:\vspace*{\dist}

           \[|\varphi'-\varphi|\leq2\delta\frac{1+\varphi}{\left|\varepsilon'v_x-v_t\right|-2\delta}\]
        \end{lemma}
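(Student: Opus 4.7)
The strategy is a direct $\varepsilon/\delta$-style estimate using the explicit formula
\[\varphi(p,v)=\frac{t-\varepsilon x}{\varepsilon v_x-v_t}\]
obtained just before the lemma. Introducing the shorthand $A=t-\varepsilon x$, $B=\varepsilon v_x-v_t$ for the unprimed data and $A'=t'-\varepsilon' x'$, $B'=\varepsilon' v_x'-v_t'$ for the primed data, one has $\varphi=A/B$ and $\varphi'=A'/B'$, so the whole argument reduces to controlling a quotient of affine functions of the four inputs.

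First I would reduce to the case $\varepsilon=\varepsilon'$. The sign $\varepsilon=\mathrm{sign}(tv_x-xv_t)$ is locally constant on an open dense subset of $\peqsub{G}{\!I}$, and jumps can only happen through geodesics that hit the origin; in that case both $t-\varepsilon x$ and $t-\varepsilon' x$ are simultaneously close to the same value, so the bound will extend by continuity from the generic case. Assuming henceforth $\varepsilon=\varepsilon'$, the differences are purely linear:
\[|A-A'|=|(t-t')-\varepsilon(x-x')|\le 2\delta,\qquad |B'-B|=|\varepsilon(v_x'-v_x)-(v_t'-v_t)|\le 2\delta.\]

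The core step is the algebraic identity (which only uses $A=\varphi B$)
\[\varphi-\varphi'=\frac{A}{B}-\frac{A'}{B'}=\frac{A-A'}{B'}+\varphi\,\frac{B'-B}{B'},\]
from which the triangle inequality immediately yields
\[|\varphi-\varphi'|\le\frac{|A-A'|+\varphi\,|B'-B|}{|B'|}\le\frac{2\delta(1+\varphi)}{|B'|}.\]
The choice to pivot through $B'$ (rather than $B$) is exactly what makes the bound come out with $\varepsilon' v_x-v_t$ in the denominator.

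Finally I would lower-bound $|B'|$ by a reverse triangle inequality against the unprimed velocities:
\[|B'|=|\varepsilon' v_x'-v_t'|\ge |\varepsilon' v_x-v_t|-|\varepsilon'(v_x-v_x')|-|v_t-v_t'|\ge |\varepsilon' v_x-v_t|-2\delta,\]
which is positive for $\delta$ sufficiently small and plugs directly into the previous display to give the stated bound. The main obstacle is not the computation (which is routine once the identity above is used) but the bookkeeping associated with the discrete variable $\varepsilon$: verifying that the reduction to $\varepsilon=\varepsilon'$ is legitimate and that the case where either geodesic hits the origin ($\varepsilon=0$) is covered by the same formula — both facts are granted by the uniform expression of $\varphi$ in terms of $\varepsilon\in\{-1,0,1\}$ and by continuity of $\varphi$ across the sign transition.
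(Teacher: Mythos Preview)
Your approach is essentially the paper's: the same algebraic identity $\varphi-\varphi'=\dfrac{(A-A')+\varphi(B'-B)}{B'}$, the same triangle/reverse-triangle estimates, and the same choice to pivot through $B'$ so that $\varepsilon' v_x-v_t$ appears in the denominator.

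The one place where your write-up is weaker is the sign bookkeeping. You reduce to $\varepsilon=\varepsilon'$ and push the remaining case off to ``continuity from the generic case''; but note that if $\varepsilon=0$ and $\varepsilon'=\pm 1$, the quantities $t-\varepsilon x=t$ and $t-\varepsilon' x=t\mp x$ are \emph{not} close, so the estimate $|A-A'|\le 2\delta$ is not available as stated. The paper avoids this by keeping the computation exact: writing $x'=x+\widetilde{x}$, etc., one finds
\[
\varphi'-\varphi=\frac{(\widetilde{t}-\varepsilon'\widetilde{x})-\varphi(\varepsilon'\widetilde{v}_x-\widetilde{v}_t)+(\varepsilon-\varepsilon')\xi}{(\varepsilon'v_x-v_t)+(\varepsilon'\widetilde{v}_x-\widetilde{v}_t)},
\]
and then observes that the extra term $(\varepsilon-\varepsilon')\xi$ always vanishes: if $\varepsilon=0$ then $\xi(p,v)=0$, while if $\varepsilon\neq 0$ then small $\delta$ forces $\varepsilon'=\varepsilon$. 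This disposes of the sign issue in one line, without any limiting argument, and then the triangle/reverse-triangle inequalities give the bound exactly as you wrote.
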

        \begin{proof}\mbox{}\\
           Let us denote $x'=x+\widetilde{x}$ with $|\widetilde{x}|<\delta$ and analogously for the rest of the variables. Expanding the expressions involved, we obtain:\vspace*{\dist}

           \[\varphi'-\varphi=\frac{(\widetilde{t}-\varepsilon'\widetilde{x})-\varphi(\varepsilon'\widetilde{v}_x-\widetilde{v}_t)+(\varepsilon-\varepsilon')\xi}{(\varepsilon'v_x-v_t)+(\varepsilon'\widetilde{v_x}-\widetilde{v_t})}\]

           If $\varepsilon=0$ then $\xi(p,v)=0$, otherwise $\varepsilon=\varepsilon'$ because we are taking $\delta$ small enough such that close geodesics hit the same side. So either way the last term in the numerator is zero. Hence\vspace*{\dist}

           \[|\varphi'-\varphi|=\left|\frac{(\widetilde{t}-\varepsilon'\widetilde{x})-\varphi(\varepsilon'\widetilde{v}_x-\widetilde{v}_t)}{(\varepsilon'v_x-v_t)+(\varepsilon'\widetilde{v_x}-\widetilde{v_t})}\right|\leq2\delta\frac{1+\varphi}{\left|\varepsilon'v_x-v_t\right|-2\delta}\]

           in the inequality we have used the triangle inequality for the numerator and the reverse triangle inequality for the denominator (where we have also used that $\delta$ is small enough such that $|\varepsilon'v_x-v_t|>|\varepsilon'\widetilde{v}_x-\widetilde{v}_t|$).
        \end{proof}

    \begin{minipage}{1\linewidth}
        \begin{wrapfigure}{r}{0.35\textwidth}\vspace*{-3.5ex}
          \centerline{\includegraphics[width=0.35\textwidth]{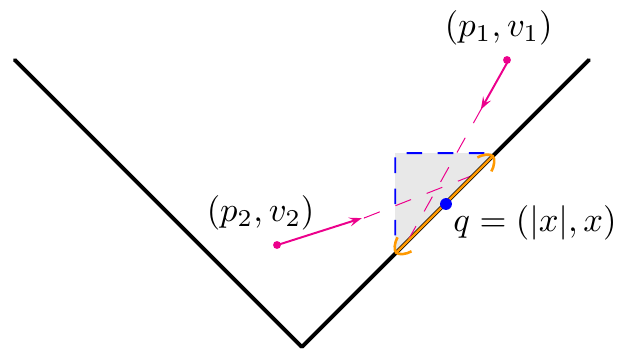}}
        \end{wrapfigure}
        The previous lemma is very important as it relates the affine parameter of two close geodesics, notice that this difference can be taken as small as we want by making $\delta$ small. Now we are going to state a fundamental result that relates the space $Q_4$ with its boundary $\partial Q_4$. It says that the geodesics finishing in an interval of the $\partial Q_4$, are the same that the ones entering and remaining until the end of their parameter into the ball attached to this interval, which geometrically is obvious (but the analytical proof is quite cumbersome). When no confusion is possible, we will omit the variables and write simply $\varphi=\varphi(p,v)$ and $\xi=\xi(p,v)$.\vspace*{2ex}
    \end{minipage}

        \begin{proposition}\label{proposition equal sets}\mbox{}\\
           $S\left(\rule{0ex}{2.4ex} B(\peqsub{x}{0},s)\right)=\xi^{-1}\left(\rule{0ex}{2.2ex}I(\peqsub{x}{0},s)\right)$ where we take $s<|\peqsub{x}{0}|$ if $\peqsub{x}{0}\neq0$.
        \end{proposition}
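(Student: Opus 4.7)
The plan is to prove the two inclusions separately, relying on the explicit formulas for $\varphi$ and $\xi$ together with Lemma~\ref{lemma varphi-varphi'}. Preliminarily, $\xi$ is continuous on $\peqsub{G}{\!I}$: within each region of constant $\varepsilon\in\{-1,0,1\}$ its formula is a rational function whose denominator $\varepsilon v_x-v_t$ stays non-zero because $v\notin\overline{Q_4}$, and across the transition locus $\{tv_x-xv_t=0\}$ the numerator also vanishes, so $\xi\equiv 0$ there. The hypothesis $s<|\peqsub{x}{0}|$ when $\peqsub{x}{0}\neq 0$ ensures that $I(|\peqsub{x}{0}|,s)$ avoids $0$, so $\xi(p',v')\in I(\peqsub{x}{0},s)$ is equivalent to $|\xi(p',v')|\in I(|\peqsub{x}{0}|,s)$; the case $\peqsub{x}{0}=0$ is immediate since $|\xi|\geq 0>-s$.

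For the inclusion $\supseteq$, assume $\xi(p,v)\in I(\peqsub{x}{0},s)$ and fix $\eta>0$ with $(\xi(p,v)-\eta,\xi(p,v)+\eta)\subset I(\peqsub{x}{0},s)$. By continuity of $\xi$, pick a $G$-neighborhood $V$ of $(p,v)$ on which $|\xi(p',v')-\xi(p,v)|<\eta/3$, and by the lemma shrink $V$ further so that $|\varphi(p',v')-\varphi(p,v)|$ is arbitrarily small. Set $A=V\times(\varphi(p,v)-\delta,\varphi(p,v)+\delta)$ with $\delta>0$ small. A direct check of the formulas yields $p'+\varphi(p',v')v'=(|\xi(p',v')|,\xi(p',v'))$, so for $(p',v',\tau)\in A\cap H_+$ the image $\Psi(p',v',\tau)$ sits within $O(\delta)$ of $(|\xi(p',v')|,\xi(p',v'))\in\overline{B(\peqsub{x}{0},s)}\cap\mathcal{D}$; choosing $\delta$ small enough forces $\Psi(p',v',\tau)\in B(\peqsub{x}{0},s)$, hence $(p,v)\in S(B(\peqsub{x}{0},s))$.

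For the inclusion $\subseteq$, I argue by contrapositive. If $|\xi(p,v)-\peqsub{x}{0}|>s$, then as $\tau\nearrow\varphi(p,v)$ the $x$-coordinate of $\gamma_{p,v}(\tau)$ converges to $\xi(p,v)$ outside $\overline{I(\peqsub{x}{0},s)}$, so $\gamma_{p,v}(\tau)\notin B(\peqsub{x}{0},s)$ for $\tau$ near $\varphi(p,v)$, ruling out $(p,v)\in S(B(\peqsub{x}{0},s))$ since any $H$-neighborhood of $(p,v,\varphi(p,v))$ contains such triples in $H_+$. The delicate boundary case $|\xi(p,v)-\peqsub{x}{0}|=s$ is the main obstacle: I would exhibit arbitrarily close $(p',v')\in\peqsub{G}{\!I}$ with $|\xi(p',v')-\peqsub{x}{0}|>s$ and apply the previous argument to those perturbed geodesics, whose affine lengths lie close to $\varphi(p,v)$ by the lemma. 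The existence of such perturbations follows from $\xi$ being locally non-constant near $(p,v)$; for example $\partial_t\xi=v_x/(\varepsilon v_x-v_t)\neq 0$ when $v_x\neq 0$, and a different partial derivative handles the remaining degenerate case, so $\xi$ is open and the perturbation exists.
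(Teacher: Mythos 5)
Your proof is correct and follows essentially the same route as the paper's: both inclusions hinge on tracking the $x$-coordinate of nearby geodesics near the end of their affine parameter, with Lemma \ref{lemma varphi-varphi'} controlling $\varphi'$, and the delicate case $|\xi-x_0|=s$ handled by perturbing the geodesic so that $\xi$ lands outside the interval. The only notable difference is that the paper exhibits this perturbation explicitly as $(v_t\pm\varepsilon\eta,v_x\pm\eta)$, which preserves $\varphi$ exactly and shifts $\xi$ by $\pm\eta\varphi$, whereas you obtain it generically from the non-vanishing of $\nabla\xi$ (legitimate, since the hypothesis $s<|x_0|$ rules out $\varepsilon=0$ in the boundary case, so the formula for $\xi$ is smooth there); both variants close the argument.
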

\begin{proof}\mbox{}\\
           ``$\subset$'' Let $(p,v)=((t,x),(v_t,v_x))\in S(B(\peqsub{x}{0},s))$, then by definition, there exists $U\in \peqsub{N}{\!H}(p,v,\varphi)$ such that $\psi(U\cap H_+)\subset B(\peqsub{x}{0},s)$. As $H=\R^6$ with the usual topology, we may consider by shrinking $U$, that there exist $\delta>0$ and $\nu>0$ such that:\vspace*{\dist}

           \[U=I(t,\delta)\times I(x,\delta)\times I(v_t,\delta)\times I(v_x,\delta)\times I(\varphi,\nu)\]

           Let us now check that $(p,v)\in\xi^{-1}(I(\peqsub{x}{0},s))$, or equivalently that $|\xi(p,v)-\peqsub{x}{0}|<s$:
           \begin{align*}
              |\xi-\peqsub{x}{0}|&=\left|x+v_x(\varphi-\tau)-\peqsub{x}{0}+v_x\tau\right|\leq\\
                            &\leq\left|\rule{0ex}{2.8ex}\pi_x\left(\rule{0ex}{2.2ex}\psi(p,v,\varphi-\tau)\right)-\peqsub{x}{0}\right|+|v_x|\tau< s+|v_x|\tau
           \end{align*}
           where $\pi_x$ is the projection over the $x$ coordinate (remember that $\psi:H_+\rightarrow \R^2$) and $\tau\in(0,\varphi)$ is small to ensure that $(p,v,\varphi-\tau)\in H_+$ and $\psi(p,v,\varphi-\tau)\in B(\peqsub{x}{0},s)$. Finally notice that the last inequality follows from the fact that
           \begin{equation*}
                \psi(U\cap H_+)\subset B(\peqsub{x}{0},s)=[I(|\peqsub{x}{0}|,s)\times I(\peqsub{x}{0},s)]\cap Q_4\quad \longrightarrow \quad \pi_x\left(\rule{0ex}{2.2ex}\psi(U\cap H_+)\right)\subset I(\peqsub{x}{0},s)
           \end{equation*}

            \begin{minipage}{1\linewidth}
        \begin{wrapfigure}{r}{0.34\textwidth}\vspace*{-3ex}
          \centerline{\includegraphics[width=0.34\textwidth]{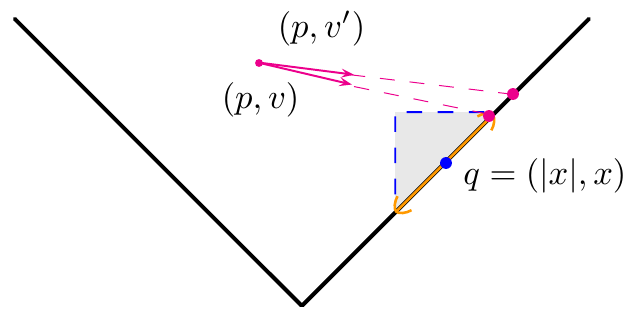}}
        \end{wrapfigure}
        As this inequality holds for every small $\tau$, in particular $|\xi-\peqsub{x}{0}|\leq s$ and it only remains to prove that the equality is not possible. First of all notice that if $\varepsilon(p,v)=0$, then necessarily $\peqsub{x}{0}=0$, as we are taking $s<|\peqsub{x}{0}|$ whenever $\peqsub{x}{0}\neq0$. Hence $|\xi-\peqsub{x}{0}|=|0-0|<s$ and so we are done. We consider now $\varepsilon\neq0$, and let us assume that $|\xi-\peqsub{x}{0}|= s$ which implies that $\xi=\peqsub{x}{0}\pm s$. The idea is shown on the right picture where we see that we have to take a geodesic of $U\cap H_+$ such that it points outside the interval, and we will see that it cannot stay in $B(\peqsub{x}{0},s)$ until the end of its parameter which is a contradiction. Let us consider:
        \end{minipage}

           \[(p,v',\tau')=((t,x),(v_t\pm\varepsilon\eta,v_x\pm\eta),\varphi'-\tau)\in H_+\]

           where $\varphi'=\varphi(p,v')$ and recall that $\xi=\peqsub{x}{0}\pm s$. First of all notice that:
           \begin{align*}
              &\xi(p,v')=\frac{t(v_x\pm\eta)-x(v_t\pm\varepsilon\eta)}{\varepsilon(v_x\pm\eta)-(v_t\pm\varepsilon\eta)}= \frac{tv_x-xv_t\pm\eta(t-x\varepsilon)}{\varepsilon v_x -v_t}=\xi(p,v)\pm\eta\varphi\\
              &\varphi'=\frac{t-\varepsilon x}{\varepsilon(v_x\pm\eta)-(v_t\pm\varepsilon\eta)}=\frac{t-\varepsilon x}{\varepsilon v_x -v_t}=\varphi
           \end{align*}

           The first equation tells us that indeed, taking $\eta$ small, one geodesic hits the semidiagonal close to the other one, while the second one tells that their affine parameter is the same (the change in the velocity is compensated with the change in the space traversed). If we take $0<\eta<\delta$ and $|\tau|<\nu$ then $(p,v',\tau')=(p,v',\varphi-\tau)\in U\cap H_+$ and so $\psi(p,v',\varphi-\tau)\in B(\peqsub{x}{0},s)$ for every valid $\tau>0$. But let us see that then we obtain a contradiction as this new geodesic will eventually get out of $B(\peqsub{x}{0},s)$:
           \begin{align*}
             s&>|x+(v_x\pm\eta)(\varphi-\tau)-\peqsub{x}{0}|=\\
              &=|\pm s\pm\eta\varphi-\tau (v_x\pm\eta)|=|s+\eta\varphi-\tau(\pm v_x+\eta)|
           \end{align*}
           If $(\pm v_x+\eta)\leq0$ then we are done as everything is nonnegative and $s+\eta\varphi+\tau|\pm v_x+\eta|\geq s+\eta\varphi>s$ which is a contradiction. If $(\pm v_x+\eta)>0$ then we take $\tau=\frac{\eta\varphi}{k(\pm v_x+\eta)}$ with $k$ large enough such that $\tau<\nu$, and we obtain again a contradiction.\vspace*{2ex}

           ``$\supset$'' Let $(p,v)=((t,x),(v_t,v_x))\in\xi^{-1}(I(\peqsub{x}{0},s))$, thus $\xi(p,v)=\peqsub{x}{0}+\lambda$ with $|\lambda|<s$. We consider
           \begin{align*}
              U&=I(t,\delta)\times I(x,\delta)\times I(v_t,\delta)\times I(v_x,\delta)\times I(\varphi,\nu)\in \peqsub{N}{\!H}(p,v,\varphi)\\
              \nu&=\frac{1}{m|v_x|}\left(\frac{n-2}{n}s-|\lambda|\right)>0
           \end{align*}

           where $\delta$, $n$ and $m$ will be chosen later on (we take already $n$ big enough such that $\nu>0$) and besides, if $v_x=0$ we consider just $\nu=1$. We have to check now that we can take $\delta$ and $\nu$ small enough such that $\psi(U\cap H_+)\subset B(\peqsub{x}{0},s)$. We consider an arbitrary $(p',v',\varphi+\mu)\in U\cap H_+$, thus $|t-t'|<\delta$ (analogously for the rest of the coordinates), $|\mu|<\nu$ and $0<\varphi+\mu<\varphi'$, the last condition coming from the fact that the point belongs to $H_+$.
           \begin{align*}
              \bigg|\pi_x&\bigg(\rule{0ex}{2ex}\psi(p',v',\varphi+\mu)\bigg)-\peqsub{x}{0}\bigg|=\left|x'+v_x'(\varphi+\mu)-\peqsub{x}{0}\right|=\\
              &=\left|\xi-\peqsub{x}{0}+v_x\mu+(x'-x)+(v_x'-v_x)(\varphi+\mu)\right|<\\
              &<\left|\lambda+v_x\mu\right|+(1+\varphi+\mu)\delta<\left|\lambda+v_x\mu\right|+\frac{s}{n}
           \end{align*}
           The last inequality follows if we take $\delta<\frac{s}{1+\varphi+\mu}\frac{1}{n}$. Now we have to prove that if we take $\delta$ and $\nu$ small enough, then $\left|\lambda+v_x\mu\right|<\frac{n-1}{n}s$ for every $\mu$ such that $|\mu|<\nu$ and $0<\varphi+\mu<\varphi'$, hence for every $\mu\in(-min(\varphi,\nu),min(\varphi'-\varphi,\nu))$. Notice that it is enough to prove this for every $|\mu|<\nu$ (if they are the minima we are done, if they are not we have proved it for a wider range than necessary so we are also done), provided that $\varphi'-\varphi>-\nu$ which is not a problem as the condition $|\varphi-\varphi'|<\nu$ can be achieved according to lemma \ref{lemma varphi-varphi'} taking $\delta$ small enough.\vspace*{2ex}

           Summarizing, it only remains to prove that $\left|\lambda+v_x\mu\right|<\frac{n-1}{n}s$ for every $|\mu|<\nu$. As the function is monotonic in $\mu$ it is in fact enough to prove the inequality for the extrema of this interval:\vspace*{\dist}

           \[\left|\lambda\pm v_x\nu\right|<\frac{n-1}{n}s\]

           If $v_x$ is zero, then the inequality trivially holds considering the definition of $\nu$. Otherwise we have:
           \begin{align*}
              |\lambda\pm& v_x\nu|=\left|sgn(\lambda)|\lambda|\pm v_x\frac{1}{m|v_x|}\left(\frac{n-2}{n}s-|\lambda|\right)\right|=\\
               &=\left| |\lambda|\left(\pm sgn(\lambda)sgn(v_x)-\frac{1}{m}\right) +\frac{n-2}{mn}s\right|\leq\\
               &\leq \left(1+\frac{1}{m}\right)|\lambda| +\frac{n-2}{mn}s\overset{\star}{<}\\
               &=\left(\frac{m+1}{m}+\frac{1}{m}\right)\frac{n-2}{n}s<\frac{n-1}{n}s
           \end{align*}

           where in the $\star$ inequality we have used that $|\lambda|<\frac{n-2}{n}s$ (remember that $\nu>0$) and in the last one we have taken $m$ large enough.
        \end{proof}

        We saw in the previous section that $\beta_1=\{S(U)\ /\ U\in \mathcal{T}\}$ is a base of a topology (where $\mathcal{T}$ denotes the usual topology of $Q_4\subset\R^2$). We can also define $\beta_2=\{S(B(x,s))\ /\ x\in\R,\ s>0\}$. It is clear that for any $(p,v)\in\peqsub{G}{\!I}$ we have that $(p,v)\in\xi^{-1}(I(\xi,s))=S(B(\xi,s))$ and so\vspace*{\dist}

        \[\peqsub{G}{\!I}=\bigcup_{x\in\R}\bigcup_{s>0}S\left(\rule{0ex}{2ex}B(x,s)\right)\]

        This fact, together with property \refconchap{remark interseccion base topology}, implies that $\beta_2$ is also a base of a topology, and we will see in next lemma, that indeed they induce the same topology. Finally let us remark that if we consider the base of $Q_4$ given by $\beta_3=\{B=B((t,x),s)\ /\ (t,x)\in Q_4,\ s<d((t,x),\partial Q_4)\}$, then every $S(B)$ is empty as these open sets are not ``attached to the boundary'' (here $d$ denotes the Euclidean distance).

         \needspace{\baselineskip}
        \begin{lemma}\label{lemma equivalencia}\mbox{}
         \begin{itemize}
           \item $\peqsub{\mathcal{T}}{\beta_2}=\peqsub{\mathcal{T}}{\beta_1}$
           \item Let $\gamma_1,\gamma_2\in \peqsub{G}{\!I}$, then $\gamma_1\sim\gamma_2$ if and only if $\xi(\gamma_1)=\xi(\gamma_2)$.
         \end{itemize}
        \end{lemma}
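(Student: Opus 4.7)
By Proposition \ref{proposition equal sets}, each basic open set $S(B(x,s))$ in $\beta_2$ equals $\xi^{-1}(I(x,s))$, so $\mathcal{T}_{\beta_2}$ is precisely the initial topology on $\peqsub{G}{\!I}$ induced by the map $\xi:\peqsub{G}{\!I}\to\R$. Both items of the lemma will then follow once we identify $\mathcal{T}_{\beta_1}$ with this initial topology.

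\textbf{First item.} The inclusion $\mathcal{T}_{\beta_2}\subset\mathcal{T}_{\beta_1}$ is immediate: every $B(x,s)$ is open in $Q_4$, so $\beta_2\subset\beta_1$. For the reverse inclusion I would fix $S(U)\in\beta_1$ and $(p,v)\in S(U)$ with $\peqsub{x}{0}=\xi(p,v)$, and aim to produce $s>0$ such that $B(\peqsub{x}{0},s)\subset U$; then monotonicity of $S(\cdot)$ together with Proposition \ref{proposition equal sets} yields the chain $(p,v)\in\xi^{-1}(I(\peqsub{x}{0},s))=S(B(\peqsub{x}{0},s))\subset S(U)$, exhibiting a basic $\mathcal{T}_{\beta_2}$-neighbourhood of $(p,v)$ contained in $S(U)$. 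To produce such an $s$, I would exploit the neighbourhood $A=I(t,\delta)\times I(x,\delta)\times I(v_t,\delta)\times I(v_x,\delta)\times I(\varphi,\nu)$ supplied by the definition of $S(U)$ and check that $B(\peqsub{x}{0},s)\subset \Psi(A\cap H_+)$ once $s$ is small. The idea is that any $q\in B(\peqsub{x}{0},s)$ can be written as $p+(\varphi-\epsilon)v'$ with $v'=(q-p)/(\varphi-\epsilon)$ and $\epsilon>0$ small, that is, $q$ lies on the geodesic issuing from $p$ with slightly rotated velocity $v'$; a direct expansion of $v'-v$ (and Lemma \ref{lemma varphi-varphi'} to control $\varphi(p,v')-\varphi$) confirms that $(p,v',\varphi-\epsilon)\in A\cap H_+$ provided $s$ and $\epsilon$ are small, and the condition $\tau'=\varphi-\epsilon<\varphi(p,v')$ is automatic because $q\in Q_4$.

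\textbf{Second item.} Once $\mathcal{T}_\beta$ is identified with $\xi^{-1}(\mathcal{T}_\R)$, the equivalence relation becomes transparent. If $\xi(\gamma_1)=\xi(\gamma_2)$, then every basic open $\xi^{-1}(I(y,s))\ni\gamma_1$ satisfies $\xi(\gamma_2)=\xi(\gamma_1)\in I(y,s)$, hence contains $\gamma_2$ as well, and symmetrically; therefore $\gamma_1\sim\gamma_2$. Conversely, if $\xi(\gamma_1)\neq\xi(\gamma_2)$, the T1 property of $\R$ produces an interval $I(\xi(\gamma_1),s)$ missing $\xi(\gamma_2)$, so that $\xi^{-1}(I(\xi(\gamma_1),s))$ is a $\mathcal{T}_\beta$-neighbourhood of $\gamma_1$ that fails to contain $\gamma_2$, and the equivalence is broken.

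\textbf{Main obstacle.} The only genuinely technical point is the covering inclusion $B(\peqsub{x}{0},s)\subset\Psi(A\cap H_+)$ used in the first item: although geometrically clear (perturbing the velocity sweeps out a two-dimensional region near the endpoint $(|\peqsub{x}{0}|,\peqsub{x}{0})$), it demands a careful though elementary estimate to keep the perturbed initial data $(p,v',\varphi-\epsilon)$ inside the prescribed box $A$.
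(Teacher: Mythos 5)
Your proposal is correct and follows essentially the same route as the paper: reduce everything to Proposition \ref{proposition equal sets}, prove $\mathcal{T}_{\beta_1}\subset\mathcal{T}_{\beta_2}$ by exhibiting, for each $(p,v)\in S(U)$, a box $B(\xi(p,v),s)\subset U$, and then read off the equivalence relation from the interval base $\xi^{-1}(I(x,s))$. The only divergence is in how that box inclusion is justified --- the paper argues that $\overline{U}$ must contain a whole boundary interval and from there concludes $B(\xi,s_0)\subset U$, whereas your direct covering of $B(x_0,s)$ by the tails $\Psi(p,v',\varphi-\epsilon)$ of velocity-perturbed geodesics (with $\varphi-\epsilon<\varphi(p,v')$ guaranteed by convexity of $Q_4$) is, if anything, a more explicit rendering of the same geometric idea.
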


        \needspace{\baselineskip}
        \begin{proof}\mbox{}
           \begin{itemize}
              \item As $\beta_2\leq\beta_1$ then $\peqsub{T}{\beta_2}\leq\peqsub{T}{\beta_1}$. If we now consider a basic open set $S(U)\in\beta_1$ and $(p,v)\in S(U)$, it is enough to prove that there exists some $S(B)\in\beta_2$ such that $(p,v)\in S(B)\subset S(U)$. Again we have $(p,v)\in\xi^{-1}(I(\xi,s))=S(B(\xi,s))$, let us see that we can take $s$ small enough in such a way that $B(\xi,s)\subset U$, which would imply $(p,v)\in S(B(\xi,s))\subset S(U)$.\vspace*{2ex}

                  Let $V=B(\xi,s)\in E(\xi(p,v))$ be an open basic set of $Q_4$. As $(p,v)\in S(U)$ then $\psi(p,v,\varphi-\tau)\in U$ for every $\tau$ small enough, on the other hand $(p,v)\in S(V)$ and so we have also $\psi(p,v,\varphi-\tau)\in V$ for every $\tau$ small enough making $U\cap V\neq\phi$ for every $V$, and hence $\xi(p,v)\in\overline{U}$. $U\subset Q_4$ and $\xi(p,v)\in\partial Q_4$. So as we expected, $U$ is attached to the boundary, but apparently it could happen that $\overline{U}\cap\partial Q_4=\{(|\xi(p,v)|,\xi(p,v))\}$. However by applying the same argument of proposition \ref{proposition equal sets}, we can prove that $|\xi(p,v)-\peqsub{x}{0}|=s$ is not possible, hence we conclude that there should exist a whole interval $I(\xi(p,v),\peqsub{s}{0})$ in $\overline{U}\cap\partial Q_4$ and so $B(\xi,\peqsub{s}{0})\subset U$.
              \item The left implication is clear according to the definition. For the other implication let us suppose that $r=\xi(\beta)-\xi(\alpha)>0$, then we consider the open set $U=\xi^{-1}(I(\xi(\alpha),r/2))\in E(\alpha)$. As $|\xi(\alpha)-\xi(\beta)|=r>r/2$ then $\beta\notin U$ and therefore $\beta\nsim\alpha$.\vspace*{-4ex}
           \end{itemize}
        \end{proof}

        Notice that if we consider $M^+$ (i.e. $Q_4$ with the identification provided by $\peqsub{G}{\theta_0}$), then proposition \ref{proposition equal sets} is still valid as it is a set equality. In particular, saturating (i.e. considering the union), we have:\vspace*{\dist}

        \[S\left(\rule{0ex}{2.2ex} S_{\sim}[B(\peqsub{x}{0},s)]\right)=\xi^{-1}\left(\rule{0ex}{2.2ex}S_{\sim}[I(\peqsub{x}{0},s)]\right)\]

        Thus the first point of lemma \ref{lemma equivalencia} is also valid in $M^+$. The second one becomes $\gamma_1\sim\gamma_2$ if and only if\vspace*{\dist}

        \[\Big(\mathrm{sign}(\xi(\alpha))=\mathrm{sign}(\xi(\gamma_2))=0\Big)\vee\Big(\mathrm{sign}(\xi(\alpha))=\mathrm{sign}(\xi(\gamma_2))\neq 0 \wedge  \ln|\xi(\gamma_1)|\equiv\ln|\xi(\gamma_2)|\ \mathrm{mod}\, \peqsub{\theta}{0}\Big)\]

        as a consequence of the exponential behaviour of the group action over the semidiagonals.

        \begin{lemma}\mbox{}\\
          The $g$-boundary $\partial$ of $Q_4$ is homeomorphic to $\R$.\label{lemma partia cong R}
        \end{lemma}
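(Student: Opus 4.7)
The plan is to exhibit an explicit homeomorphism $\bar{\xi}:\partial\to\R$ induced by the hit-point map $\xi:\peqsub{G}{\!I}\to\R$ of Proposition \ref{proposition equal sets}. Since the paper's $\xi$ records precisely the $x$-coordinate of the end-point on $\mathcal{D}=\partial Q_4$, and since Lemma \ref{lemma equivalencia} shows $\gamma_1\sim\gamma_2$ iff $\xi(\gamma_1)=\xi(\gamma_2)$, the map $\xi$ descends unambiguously to the quotient: define $\bar{\xi}([\gamma]):=\xi(\gamma)$. Well-definedness and injectivity are immediate from the ``if and only if'' in Lemma \ref{lemma equivalencia}. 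Surjectivity is easy: given any $\peqsub{x}{0}\in\R$, pick an interior point $p\in Q_4$ and a vector $v$ pointing straight at $(|\peqsub{x}{0}|,\peqsub{x}{0})$; then $(p,v)\in\peqsub{G}{\!I}$ and $\xi(p,v)=\peqsub{x}{0}$ by construction.

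For continuity of $\bar{\xi}$, I would use the universal property of the quotient topology: $\bar{\xi}$ is continuous iff $\xi=\bar{\xi}\circ\pi$ is continuous on $\peqsub{G}{\!I}$ (endowed with $\mathcal{T}_{\beta_1}=\mathcal{T}_{\beta_2}$). For any open interval $I(\peqsub{x}{0},s)\subset\R$, Proposition \ref{proposition equal sets} gives $\xi^{-1}(I(\peqsub{x}{0},s))=S(B(\peqsub{x}{0},s))\in\beta_2$, so preimages of basic opens are open; and an arbitrary open in $\R$ is a union of such intervals, so $\xi$ (hence $\bar{\xi}$) is continuous.

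For the inverse direction, I would show $\bar{\xi}$ is an open map. Let $V\subset\partial$ be open; then $\pi^{-1}(V)$ is open in $\peqsub{G}{\!I}$, so by Lemma \ref{lemma equivalencia} it can be written as a union
\[\pi^{-1}(V)=\bigcup_{\alpha}S\!\left(\rule{0ex}{2.2ex}B(\peqsub{x}{\alpha},s_\alpha)\right)=\bigcup_\alpha\xi^{-1}\!\left(\rule{0ex}{2.2ex}I(\peqsub{x}{\alpha},s_\alpha)\right).\]
The crucial observation is that each $\xi^{-1}(I(\peqsub{x}{\alpha},s_\alpha))$ is a saturated set for $\sim$ (it is literally a union of level sets of $\xi$, which are the equivalence classes). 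Consequently $\bar{\xi}(V)=\xi(\pi^{-1}(V))=\bigcup_\alpha I(\peqsub{x}{\alpha},s_\alpha)$, which is open in $\R$. Combined with the previous step, $\bar{\xi}$ is a continuous open bijection, hence a homeomorphism.

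The main (and only non-routine) obstacle has already been dispatched in Proposition \ref{proposition equal sets} and Lemma \ref{lemma equivalencia}: the whole argument hinges on the identity $S(B(\peqsub{x}{0},s))=\xi^{-1}(I(\peqsub{x}{0},s))$, which is what makes the sets of $\beta_2$ automatically saturated and what pins down the equivalence relation so neatly. Once those are in hand, the proof of the present lemma is a clean exercise in the universal property of quotient topology; no further estimates on the affine parameter are needed.
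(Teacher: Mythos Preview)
Your proof is correct and follows the same overall strategy as the paper: both factor the hit-point map $\xi$ through the quotient $\pi:\peqsub{G}{\!I}\to\partial$, invoke Lemma~\ref{lemma equivalencia} to identify $\sim$ with $\sim_\xi$, and then verify that the induced map is a homeomorphism.

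The only difference is in how continuity and openness of $\xi$ are checked. The paper argues analytically: $\xi$ is continuous from its explicit rational expression on $\peqsub{G}{\!I}\subset\R^4$, and open because $\nabla\xi\neq 0$ makes it a submersion. You instead work directly with the topology $\mathcal{T}_{\beta_2}$ on $\peqsub{G}{\!I}$, using Proposition~\ref{proposition equal sets} to translate basic opens $S(B(\peqsub{x}{0},s))$ into preimages $\xi^{-1}(I(\peqsub{x}{0},s))$ and back. Your route has the advantage of staying entirely inside the $\mathcal{T}_\beta$ topology that actually defines the quotient $\partial$, so no comparison with the Euclidean topology on $\peqsub{G}{\!I}$ is needed; the paper's route is quicker once one accepts that the submersion argument transfers. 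Either way, the substantive work was already done in Proposition~\ref{proposition equal sets} and Lemma~\ref{lemma equivalencia}, as you correctly observe.
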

        \begin{proof}\mbox{}\\
           $\xi:\peqsub{G}{\!I}\rightarrow \R$ is surjective as  the geodesics fill the whole plane, continuous as can be seen from the explicit expression over $\peqsub{G}{\!I}=Q_4\times\left(\,Q_4^c\,\right)$, and open as the gradient $\nabla\xi$ does not vanish anywhere in $\peqsub{G}{\!I}$ (it is then a submersion). These three properties imply that $\xi$ is a quotient map. Hence\vspace*{\dist}

           \[\fracdiag{\peqsub{G}{\!I}}{\sim_\xi}\cong\R\]
           where $\sim_f$ is the equivalence relation that relates two points $x,y$ if $f(x)=f(y)$. Finally notice that $\sim_\xi$ is precisely $\sim$ according to the previous lemma, thus:

           \[\partial\overset{def}{=}\fracdiag{\peqsub{G}{\!I}}{\sim}=\fracdiag{\peqsub{G}{\!I}}{\sim_\xi}\cong\R\]

           which completes the proof of the lemma.
        \end{proof}

        In order to take into account the boost identification, we have to define the map $\widetilde{\xi}:\peqsub{G}{I}\rightarrow \S_+^1\sqcup\S_-^1\sqcup\{\star\}$ (where $\{\star\}$ is a one set point) mapping $(p,v)\mapsto exp(i\ln(\pm\xi)+i\frac{2\pi}{\theta_0})\in\S^1_\pm$ when $sgn(\xi(p,v))=\pm1$ and $(p,v)\mapsto \star$ if $\xi(p,v)=0$. It is open as it is the composition of $\xi$ with the projection which is open. Considering this quotient map we see that $\partial\cong \S_+^1\sqcup\S_-^1\sqcup\{\star\}$.\vspace*{2ex}

        We now end with the following theorem that connects the results obtained in the first and second part of the paper.

        \begin{theorem}\mbox{}\\
           $\widehat{Q_4}$ is homeomorphic to $\overline{Q}_4$ with the usual topology.
        \end{theorem}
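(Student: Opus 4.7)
The plan is to construct an explicit homeomorphism $\Phi:\widehat{Q_4}\to\overline{Q}_4$ that is the identity on $Q_4$ and that sends each boundary class $[\gamma]\in\partial$ to the hit point $(|\xi(\gamma)|,\xi(\gamma))\in\partial Q_4$. All the hard analytic content is already contained in Proposition \ref{proposition equal sets} and Lemma \ref{lemma equivalencia}; what remains is to package it as a homeomorphism.

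First, I would verify that $\Phi$ is well-defined and bijective. Well-definedness on $\partial$ follows from the second item of Lemma \ref{lemma equivalencia}, which says that $\gamma_1\sim\gamma_2$ iff $\xi(\gamma_1)=\xi(\gamma_2)$, so the formula $\Phi([\gamma])=(|\xi(\gamma)|,\xi(\gamma))$ does not depend on the representative. Bijectivity onto $\overline{Q}_4=Q_4\sqcup\mathcal{D}$ is immediate: on $Q_4$ it is the identity, and for any $x_0\in\R$ the initial data $p=(|x_0|+1,x_0)$, $v=(-1,0)$ give an incomplete geodesic with $\xi=x_0$, while injectivity on $\partial$ is again Lemma \ref{lemma equivalencia}.

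Next, I would check continuity and openness of $\Phi$. By the first item of Lemma \ref{lemma equivalencia}, it suffices to test against the base $\beta_2=\{S(B(x_0,s))\}$; by the definition of the topology of $\widehat{M}$, a base of neighbourhoods of a boundary point $[\gamma]$ with $\xi(\gamma)=x_0$ is given by sets of the form
\[ B(x_0,s)\sqcup \pi\bigl(S(B(x_0,s))\bigr),\qquad s<|x_0|\text{ if }x_0\neq0. \]
Applying Proposition \ref{proposition equal sets}, $S(B(x_0,s))=\xi^{-1}(I(x_0,s))$, so $\Phi$ maps such a neighbourhood onto
\[ B(x_0,s)\cup\{(|x|,x)\ /\ x\in I(x_0,s)\}=\bigl(I(|x_0|,s)\times I(x_0,s)\bigr)\cap\overline{Q}_4, \]
which is exactly a basic open neighbourhood of $(|x_0|,x_0)$ in $\overline{Q}_4$ with the subspace topology from $\R^2$. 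Since on the interior $Q_4$ the map is the identity and clearly a local homeomorphism, this establishes that $\Phi$ is simultaneously continuous and open, hence a homeomorphism.

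The only slightly delicate point — and where I would concentrate the verification — is the matching at the origin: one must check that as $x_0\to 0$ the basic neighbourhoods of $[\gamma]$ with $\xi(\gamma)=0$ indeed correspond to the boxes $I(0,s)\times I(0,s)\cap\overline{Q}_4$ and not to something strictly smaller, which is exactly what Proposition \ref{proposition equal sets} guarantees (the case $\varepsilon=0$ handled there). Once this is checked, the theorem follows, and in particular the $g$-boundary topology on $\partial$ agrees with the subspace topology inherited from $\mathcal{D}\subset\R^2$, matching the description already obtained via the quotient construction in Section \ref{Topology}.
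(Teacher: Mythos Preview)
Your map $\Phi$ and the verification of well-definedness, bijectivity, and continuity are exactly what the paper does, and those parts are fine. The gap is in the openness direction. You assert that the sets $B(x_0,s)\sqcup\pi\bigl(S(B(x_0,s))\bigr)$ form a local base at $[\gamma]$ in $\widehat{Q_4}$, citing the first item of Lemma~\ref{lemma equivalencia} and ``the definition of the topology of $\widehat{M}$''. Neither suffices: the lemma only compares two topologies on $\peqsub{G}{\!I}$, and from $S(B(x_0,s))\subset S(U)$ one cannot conclude $B(x_0,s)\subset U$, which is what you need. Concretely, given a basic open $U\sqcup\Gamma\ni[\gamma]$, the condition $\pi^{-1}(\Gamma)\subset S(U)$ tells you only that $\gamma\in S(U)$, i.e.\ that nearby geodesics finish their affine parameter inside $U$; it does \emph{not} immediately say that $U$ contains a full box $B(x_0,s)$ attached to the hit point.

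This is where the paper spends its effort: it shows that perturbing the velocity in the direction $(v_t\pm\varepsilon\eta,\,v_x\pm\eta)$ keeps $\varphi$ fixed while sweeping $\xi$ through an interval around $x_0$, so the perturbed geodesics (all of which remain in $U$ near their endpoints, by the very definition of $\gamma\in S(U)$) fill an open truncated cone in $U$ whose closure contains an interval of $\partial Q_4$ around $(|x_0|,x_0)$. That geometric argument is the actual delicate point of the proof, and it is needed uniformly in $x_0$, not only at the origin. Once you supply it, your local-base description is justified and the rest of your argument goes through exactly as written.
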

        \begin{proof}\mbox{}\\
         Let us recall that $\widehat{Q_4}=Q_4\sqcup \partial$ and its topology $T$ is given by: $U\sqcup \Gamma$ is open over $\widehat{Q_4}$ if $U\subset Q_4$ and $\Gamma\subset\partial$ are open sets and $\pi^{-1}(\Gamma)\subset S(U)$. Now we define:

         \[\begin{array}{cccc}
            \Phi : & \widehat{Q_4}              & \longrightarrow & \left(\, \overline{Q}_4,T_{usual}\, \right)\\
                   &  x\in Q_4                  &   \longmapsto   & x\in Q_4 \\
                   &  \alpha\in\partial &   \longmapsto   & (|\xi(p,v)|,\xi(p,v))\in \partial Q_4
        \end{array}\]

        where $\alpha=[(p,v)]$. $\Phi$ is well defined as $\xi(p,v)$ is the same for every representative $(p,v)$ of $\alpha$ and so it is bijective. Let us prove that it is a homeomorphism.\vspace*{2ex}

        Let $U$ be an open set contained in $Q_4$, then $\Phi^{-1}(U)=U\sqcup\phi$ is open as they are both open sets such that $\pi^{-1}(\phi)=\phi\subset S(U)$. If $U$ is not entirely contained in $Q_4$ then we may assume that it is of the form $U=B(x,s)\sqcup\Delta(I(x,s))$ where $\Delta:\R\rightarrow \partial Q_4$ is given by $\Delta(x)=(|x|,x)$. Notice that $\Phi([(p,v)])=\Delta(\xi(p,v))$ over $\partial$. Hence\vspace*{\dist}

        \[\Phi^{-1}(U)=B(x,s)\sqcup \pi(\xi^{-1}(I(x,s)))\]

        that is an open set as they are both open sets in their respective spaces, and they also satisfy $\pi^{-1}(\pi(\xi^{-1}(I(x,s))))=\xi^{-1}(I(x,s))=S(B(x,s))$. Notice that the first equality, which is not true in general, holds in this particular case as $\xi^{-1}(A)$ is always saturated by lemma \ref{lemma equivalencia}. So $\Phi$ is continuous.\vspace*{2ex}

        Let us check that it is also open and hence an homeomorphism. Let $U\sqcup\Gamma$ be an open set of $\widehat{Q_4}$, then $U$ and $\Gamma$ are open sets of $Q_4$ and $\partial$ respectively and $\pi^{-1}(\Gamma)\subset S(U)$.\vspace*{\dist}

        \[\Phi(U\sqcup\Gamma)=U\sqcup (\Delta\circ\xi\circ\pi^{-1})(\Gamma)\]

        If we take some $(t,x)\in U$, as it is open in $Q_4$ then we can find a neighbourhood of $(t,x)$ contained in $U$ and hence in $\Phi(U\sqcup\Gamma)$. If we consider $p=(|x|,x)\in(\Delta\circ\xi\circ\pi^{-1})(\Gamma)$, then there exists $I=I(\peqsub{x}{0},\peqsub{s}{0})$ such that $p\in \Delta(I)\subset (\Delta\circ\xi\circ\pi^{-1})(\Gamma)$ as it is open in $\partial Q_4$ ($\pi$ is continuous, $\xi$ is open and $\Delta$ is a homeomorphism) and so $p\in U\sqcup\Delta(I)\subset\Phi(U\sqcup\Gamma)$. If we prove that $U\sqcup\Delta(I)$ is open over $\overline{Q}_4$ we would conclude that $\Phi(U\sqcup\Gamma)$ is also open. We have for every element $(|x'|,x')\in\Delta(I)\subset\Delta\circ\xi\circ\pi^{-1}(\Gamma)\subset \Delta\circ\xi(S(U))$, then there exists some $(p,v)\in S(U)$ such that $\xi(p,v)=x'$ and by definition there exists $A\in \peqsub{N}{\!H}(p,v,\varphi)$ such that $\psi(A\cap H_+)\subset U$. In particular there are some elements of $A$ of the form\vspace*{\dist}

        \[(p,v',\tau')=((t,x),(v_t\pm\varepsilon\eta,v_x\pm\eta),\varphi'-\tau)\in H_+\]

        with $\eta>0$. As we proved in proposition \ref{proposition equal sets}, $\varphi'=\varphi$ and $\xi(p,v')=\xi(p,v)\pm\eta\varphi$ so if $\eta<\frac{\peqsub{s}{0}-|\xi-\peqsub{x}{0}|}{\varphi}$ then $\xi(p,v')\in\Delta(I)$ for every $\eta>0$. The images of these geodesics form a cone with apex $p\in Q_4$, and all the geodesics enter $U$ and remain there until the end of their parameter ($\varphi$ for all the geodesics), so they all finish over $\Delta(I)$. Hence taking the image of these geodesics inside $U$ and $\Delta(I)$ we obtain a truncated open cone which is an open neighbourhood of $(|x'|,x)$ in $U\sqcup\Delta(I)$, so this last set is also open and hence $\Phi$ is an open map.
      \end{proof}

        Again it follows the analog result when $M^+=Q_4/\peqsub{G}{\theta_0}$ is considered, where the $\Phi$ function is defined similarly, leaving the interior points unchanged and the points of $\tilde{\partial}$ are mapped via the function $\widetilde{\xi}$. This completes the proof of the fact that the $g$-boundary recovers the boundary and the topology obtained when we consider the quotient of a closed quadrant under the group generated by a discrete boost.

      \section{Conclusions}\label{conclusions}
         In this paper we have explicitly obtained the quotient topology of the complete Misner space $\R^2_1/boost$. We find a T0 but not T1 space that is not smooth at the origin, because it is a fixed point under the action of Lorentz boosts. When the origin is removed a T1 but not T2 smooth space is obtained and finally, when just half plane over/under a diagonal is considered, we obtain a T2 smooth manifold. The behaviour of the geodesics with respect to the four circles (obtained by making the identifications over the four open semi-diagonals) strongly resembles to the behaviour of a $g$-boundary, so we have computed the $g$-boundary of the Misner space and its associated topology. We have found that indeed there is a natural identification of the $g$-boundary $\partial$ of a cone $\widetilde{Q}_i$, with the circles and the origin of the complete Misner space, and that the topology of $\widehat{Q}_4$ is the same as the (quotient) topology of $\widetilde{Q}_4\cup\widetilde{D}_3\cup\widetilde{D}_4\cup\{0\}$.

      \section*{Acknowledgments}
        The authors are very grateful to Juan Margalef Roig and Miguel S\'anchez Caja for their useful comments and support, and specially to Fernando Barbero and Robert Geroch for their patience, comments and priceless help. This work has been supported by the Spanish MINECO research grant  FIS2012-34379 and the Consolider-Ingenio 2010 Program CPAN (CSD2007-00042).

{\scriptsize
        \bibliographystyle{amsplain}      
        \bibliography{biblioteca}\addcontentsline{toc}{section}{References}
}

\clearpage
        \appendix \phantomsection\addcontentsline{toc}{section}{\appendixname}

        \clearpage

        \appendixpage

   \section{Some Topological Results}\label{section math back}
     \subsection{Separation axioms}\label{subsection hausdorff}
      For a given topological space $(M,\mathcal{T})$ and every $p\in M$ we denote $N(p)=\{U\subset M\ /\ \text{open s.t. }p\in U\}$ the set of all open neighbourhoods of $p$. Sometimes to emphasize we will write $\peqsub{N}{\mathcal{T}}(p)$ or, if the topology is obvious from the context, $\peqsub{N}{M}(p)$.

          \begin{definitions}\mbox{}\renewcommand{\labelenumi}{\thesection.\arabic{enumi}}\\
            Let $M$ be a topological space, we say that
              \begin{enumerate}
                  \setcounter{enumi}{\value{theorem}}
                  \item $M$ is \textbf{T2} (or \textbf{Hausdorff}) if for every different $x,y\in M$ there exist $U_1\in N(x)$ and $U_2\in N(y)$ such that $U_1\cap U_2=\emptyset$.
                  \item $M$ is \textbf{T1} if for every different $x,y\in M$ there exist $U_1\in N(x)$ and $U_2\in N(y)$ such that~$\left\{\begin{array}{l}
                      \hspace*{-1ex}x\notin U_2\\
                      \hspace*{-1ex}y\notin U_1
                      \end{array}\right.$
                  \item $M$ is \textbf{T0} if given two distinct points $x,y\in M$ there exists $U_1\in N(x)$ such that $y\notin U_1$ OR there exists $U_2\in N(y)$ such that $x\notin U_2$.
              \setcounter{theorem}{\value{enumi}}
              \end{enumerate}
          \end{definitions}

          \begin{remarks}\mbox{}\renewcommand{\labelenumi}{\thesection.\arabic{enumi}}
              \begin{enumerate}
                  \setcounter{enumi}{\value{theorem}}
                  \item T2 $\ \Rightarrow\ $  T1 $\ \Rightarrow\ $ T0
                  \item A space $M$ is T1 if and only if every one-point set $\{p\}$ is closed.
                  \item A space $M$ is not T0 if and only if two points have exactly the same neighbourhoods.
              \setcounter{theorem}{\value{enumi}}
              \end{enumerate}
          \end{remarks}

      \subsection{Some results on quotient topologies}
        \begin{definition}\label{def quotient topology}\mbox{}\\
            Given an equivalence relation $\sim$ over $M$, we define the \textbf{quotient topology} as the finest topology $\mathcal{T}'$ over $\quot{M}$ such that $p:(M,\mathcal{T})\rightarrow(\quot{M},\mathcal{T}')$ is continuous. Such topology is denoted as $\quot{T}$.
        \end{definition}

        In order to work with this topology, it is useful to introduce a more explicit characterization that can be found in almost any book of general topology \cite{willard2004general,munkres2000topology,roig1993introduccion}, but first we need some definitions:

        \needspace{\baselineskip}
        \begin{definitions}\mbox{}\renewcommand{\labelenumi}{\thesection.\arabic{enumi}}
            \begin{enumerate}
                \setcounter{enumi}{\value{theorem}}
                \item We call the \textbf{saturation} of $U\subset M$ to $S_\sim[U]\equiv p^{-1}(p(U))$ where $p$ is the natural projection.
                \item We say that a subset $U\subset M$ is \textbf{saturated} if $S_\sim[U]=U$.
            \setcounter{theorem}{\value{enumi}}
            \end{enumerate}
        \end{definitions}

        \begin{lemma}\label{lema saturados}\mbox{}
           \begin{itemize}
              \item $p(p^{-1}(B))=B$ for every $B\subset \quot{M}$\\[-1ex]
              \item $\quot{\mathcal{T}}=\{p(H)\ /\ H\in \peqsub{\mathcal{T}}{M} \text{ and it is saturated}\}$
           \end{itemize}
        \end{lemma}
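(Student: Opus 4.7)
The plan is to treat the two items in order, using only surjectivity of the quotient projection $p\colon M\to\quot{M}$ for the first and the defining universal property of the quotient topology from Definition \ref{def quotient topology} (namely $U\in\quot{\mathcal{T}}$ iff $p^{-1}(U)\in\peqsub{\mathcal{T}}{M}$) for the second. The first item will feed directly into the second, so the order of presentation is essential.

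For the identity $p(p^{-1}(B))=B$, I would prove the two inclusions separately. The containment $p(p^{-1}(B))\subset B$ is purely set-theoretic and holds for any map: every $y\in p(p^{-1}(B))$ is of the form $y=p(x)$ with $x\in p^{-1}(B)$, whence $y=p(x)\in B$. The reverse inclusion $B\subset p(p^{-1}(B))$ is where surjectivity of $p$ is used: given any $y\in B\subset\quot{M}$, the quotient projection onto the set of equivalence classes is surjective, so there exists $x\in M$ with $p(x)=y$; then $x\in p^{-1}(B)$ and hence $y=p(x)\in p(p^{-1}(B))$.

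For the second identity, I would fix notation by setting $\mathcal{B}=\{p(H)\ /\ H\in\peqsub{\mathcal{T}}{M}\text{ and saturated}\}$ and prove $\quot{\mathcal{T}}=\mathcal{B}$ by double inclusion. The inclusion $\mathcal{B}\subset\quot{\mathcal{T}}$ is immediate: if $H\in\peqsub{\mathcal{T}}{M}$ is saturated, then by definition of saturated set $p^{-1}(p(H))=S_\sim[H]=H\in\peqsub{\mathcal{T}}{M}$, and the characterization of $\quot{\mathcal{T}}$ gives $p(H)\in\quot{\mathcal{T}}$. For the reverse inclusion $\quot{\mathcal{T}}\subset\mathcal{B}$, given $U\in\quot{\mathcal{T}}$ I would set $H=p^{-1}(U)$; then $H\in\peqsub{\mathcal{T}}{M}$ by the characterization, and both ``$H$ is saturated'' and ``$p(H)=U$'' follow by applying the first item of the lemma to $B=U$, since
\[p^{-1}(p(H))=p^{-1}(p(p^{-1}(U)))=p^{-1}(U)=H\qquad\text{and}\qquad p(H)=p(p^{-1}(U))=U.\]
This exhibits $U$ as $p(H)$ for a saturated open $H$, closing the argument.

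Frankly, there is no real obstacle here: the entire content of the lemma reduces to surjectivity of $p$ and an unpacking of definitions. The only point one must be careful with is the logical order (the first item must be established before it is invoked in the second), and the tacit use of the surjectivity of the canonical projection onto the set of equivalence classes, which is worth stating explicitly to avoid any circular feel.
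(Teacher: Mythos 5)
Your proof is correct. Note that the paper itself does not prove Lemma \ref{lema saturados} at all: it presents it as a standard characterization and defers to the general topology references \cite{willard2004general,munkres2000topology,roig1993introduccion}, so there is no in-paper argument to compare against; what you have written is precisely the standard textbook proof, and your ordering (establishing $p(p^{-1}(B))=B$ first and then feeding it into both inclusions of the second item) is the natural one. The only point I would tighten is your appeal to ``$U\in\quot{\mathcal{T}}$ iff $p^{-1}(U)\in\peqsub{\mathcal{T}}{M}$'' as the \emph{defining} property: Definition \ref{def quotient topology} actually defines $\quot{\mathcal{T}}$ as the finest topology making $p$ continuous, so this iff-characterization itself requires the (one-line) observation that $\{U\subset\quot{M}\ /\ p^{-1}(U)\in\peqsub{\mathcal{T}}{M}\}$ is a topology (preimages commute with arbitrary unions and finite intersections), that it makes $p$ continuous, and that it contains every topology doing so. With that sentence added, the argument is complete; everything else, including the explicit use of surjectivity of the canonical projection, is exactly right.
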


        One might expect that the saturated open sets can be obtained by saturating all the open sets, unfortunately we will obtain in general some saturated subsets that are not open. For instance, the set $J=[0,1/2)$ is open in $I=[0,1]$, but if we identify the endpoints of $I$ and saturate $J$, we obtain $S_\sim J=[0,1/2)\cup\{1\}$ which is not open in $I$. Fortunately the equivalence relations that we use in that paper are quite particular in this respect and this problem will not arise.

        \begin{definition}  \mbox{}\\
            Given a homeomorphism $f:M\rightarrow M$, we define the \textbf{{\boldmath$f$}-equivalence relation} as:\vspace*{\dist}

                   \[x\sim y\ \text{ if and only if  there exists some }n\in\Z\ \ /\ \ y=f^{\left.n\right)}(x)\]

                   which can be summarized by saying that $x\sim f^{\left.n\right)}(x)$ for every $n\in\Z$.
        \end{definition}

        $G=\langle f\rangle=\{f^{\left.n\right)}\ /\ n\in\Z\}$ is a cyclic subgroup of the group $\mathrm{Hom}(M)$ of homeomorphisms of $M$.

        \needspace{\baselineskip}
        \begin{remark}\mbox{}\\
         The saturation of any open set $U$ is always open as can be seen using the following identity:\vspace*{\dist}

                    \[S_\sim=\bigcup_{n=-\infty}^\infty f^{\left.n\right)}\]

                    where both sides have to be thought as operators acting on the subsets of $M$. \label{remark saturate f}
        \end{remark}

        Finally we can characterize the quotient topology and a base of it in a suitable way for our purposes. As it is essential for the paper and we have not found any proof of this characterization (for this particular case), we provide a proof in the following lemma.

        \begin{lemma}\mbox{}\\
            Let $f:M\rightarrow M$ be an homeomorphism and $\sim_f$ its equivalence relation (see proof of lemma \ref{lemma partia cong R}), then:
            \begin{itemize}
               \item $\displaystyle\quotf{\mathcal{T}}=\left\{\rule{0ex}{2.5ex}p\bigl(S_{\sim_f}[G]\bigr)\ /\ G\in \mathcal{T}\right\}$
               \item If $\beta$ is a base of $\mathcal{T}$, then $\peqsub{\beta}{\sim_f}=\left\{\rule{0ex}{2.5ex}p\bigl(S_{\sim_f}[B]\bigl)\ /\ B\in \beta\right\}$ is a base of $\quotf{\mathcal{T}}$.\label{recipe of quotient topology}
            \end{itemize}
        \end{lemma}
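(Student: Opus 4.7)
The plan is to lean on the already-stated Lemma \ref{lema saturados}, which characterizes $\quot{\mathcal{T}}$ as $\{p(H)\ /\ H\in \peqsub{\mathcal{T}}{M}\text{ saturated}\}$, together with Remark \ref{remark saturate f}, which guarantees that $S_{\sim_f}[U]=\bigcup_{n\in\Z}f^{n}(U)$ is open whenever $U$ is open (since each $f^{n}$ is a homeomorphism, hence an open map). Everything else is essentially bookkeeping with saturations.

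For the first bullet I would prove a double inclusion. For ``$\supset$'', given $G\in\mathcal{T}$, the set $S_{\sim_f}[G]$ is open by Remark \ref{remark saturate f} and is saturated by construction, so $p(S_{\sim_f}[G])\in\quotf{\mathcal{T}}$ by Lemma \ref{lema saturados}. For ``$\subset$'', any element of $\quotf{\mathcal{T}}$ has the form $p(H)$ with $H\in\peqsub{\mathcal{T}}{M}$ saturated (again Lemma \ref{lema saturados}); since $H$ is saturated one has $H=S_{\sim_f}[H]$, hence $p(H)=p(S_{\sim_f}[H])$ with $H\in\mathcal{T}$, which puts it in the right-hand side.

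For the second bullet I would establish the stronger statement that every element of $\quotf{\mathcal{T}}$ is an actual union of members of $\peqsub{\beta}{\sim_f}$, which is a standard characterization of a base. Given $U\in\quotf{\mathcal{T}}$, the first bullet yields some $G\in\mathcal{T}$ with $U=p(S_{\sim_f}[G])$. Because $\beta$ is a base of $\mathcal{T}$, write $G=\bigcup_{i\in I}B_i$ with $B_i\in\beta$. Since saturation is the arbitrary union $\bigcup_{n\in\Z}f^{n}$, it commutes with unions, so
\[S_{\sim_f}[G]=\bigcup_{n\in\Z}f^{n}\!\Big(\bigcup_{i\in I}B_i\Big)=\bigcup_{i\in I}\bigcup_{n\in\Z}f^{n}(B_i)=\bigcup_{i\in I}S_{\sim_f}[B_i],\]
and applying $p$ gives $U=\bigcup_{i\in I}p(S_{\sim_f}[B_i])$, expressing $U$ as a union of elements of $\peqsub{\beta}{\sim_f}$.

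I do not foresee any real obstacle; the two points one must not forget are that $f$ is a homeomorphism (not merely continuous), so that each $f^{n}$ is an open map and Remark \ref{remark saturate f} applies, and that saturation commutes with arbitrary unions, which is precisely what lets the first bullet be refined into the base statement of the second.
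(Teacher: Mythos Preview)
Your proposal is correct and follows essentially the same route as the paper's proof: the same double inclusion via Lemma \ref{lema saturados} and Remark \ref{remark saturate f} for the first bullet, and the same computation pushing the union $\bigcup_{n}f^{n}$ through $\bigcup_i B_i$ for the second. The only cosmetic difference is that the paper explicitly records $\peqsub{\beta}{\sim_f}\subset\quotf{\mathcal{T}}$ before writing an arbitrary open set as a union, whereas you leave this implicit; it follows immediately from the first bullet, so there is no gap.
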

        \textbf{Proof}\mbox{}
           \begin{itemize}
              \item ``$\supset$'' Let us denote $\mathcal{T}'=\{p(S_{\sim_f}[G])\ /\ G\in \mathcal{T}\}$, and let $U\in \mathcal{T}'$. Then there exists some $G\in \mathcal{T}$ such that $U=p(S_{\sim_f}[G])$. As we have seen on remark \ref{remark saturate f}:\vspace*{\dist}

                  \[H\equiv S_{\sim_f}[G]=\bigcup_{n\in\Z}f^{\left.n\right)}(G)\]

                  which is open, as $f$ is a homeomorphism, and saturated by definition. Therefore $U=p(H)$ for some $H\in \mathcal{T}$ saturated, so $U\in\quotf{\mathcal{T}}$ according to the second point of lemma \ref{lema saturados}.\\[2ex]
                  ``$\subset$'' Let $U\in\quotf{\mathcal{T}}$, then $U=p(H)$ for some $H\in \mathcal{T}$ saturated, as it is saturated, then by definition we have $H=S_{\sim_f}[H]$. Thus taking $G=H$ we have $U=p(G)=p(S_{\sim_f}[G])$ for some $G\in \mathcal{T}$ and therefore $U\in \mathcal{T}'$.
              \item First notice that the fact that $\beta\subset \mathcal{T}$ implies by the previous point that $\beta_{\sim_f}\subset\quotf{\mathcal{T}}$ as it is required to be a base. Now let us consider a generic open set $U$ of $\quotf{\mathcal{T}}$, then again by the previous point there exists some $G\in \mathcal{T}$ such that $U=p(S_{\sim_f}[G])$. As $\beta$ is a base of $\mathcal{T}$, we have $G=\bigcup B_i$ for some $B_i\in\beta$, thus:\vspace*{-2ex}

                  \begin{align*}
                     U&=p(S_{\sim_f}[G])=p\left(S_{\sim_f}\left[\bigcup_{i\in I} B_i\right]\right)\overset{\ref{remark saturate f}}{=}\\
                      &=p\left(\bigcup_{n\in\Z}f^{\left.n\right)}\left[\bigcup_{i\in I} B_i\right]\right)\overset{\star}{=}p\left(\bigcup_{n\in\Z}\bigcup_{i\in I} f^{\left.n\right)}(B_i)\right)=\\
                      &=p\left(\bigcup_{i\in I} \bigcup_{n\in\Z}f^{\left.n\right)}(B_i)\right)\overset{\ref{remark saturate f}}{=}p\left(\bigcup_{i\in I} S_{\sim_f}\left[B_i\right]\right)\overset{\star}{=}\bigcup_{i\in I} p\left(S_\sim\left[B_i\right]\right)
                  \end{align*}
                  where on the $\star$ equalities we have used $F(\bigcup A_i)=\bigcup F(A_i)$ for arbitrary unions.\hfill\QED
           \end{itemize}

        So finally we have reached a very convenient way to handle the quotient topology in our particular case: we need to consider a local base for every $x\in M$, saturate those neighbourhoods, and project them through $p:M\rightarrow \quotf{M}$ to obtain a base of the quotient topology, which is enough to describe the whole topology. We will make extensive use of this idea in the paper.

        \section{Actions}\label{subsection actions}
        We consider an action given by homeomorphisms, which introduces naturally an equivalence relation identifying the points with its images under the action. The quotient defined by this equivalence relation is usually denoted by $\fracdiag{M}{\ G}\equiv\quot{M}$. Our aim now is to determine what conditions have to be fulfilled by the action and the space in order to obtain a manifold. As we are just interested in $G\subset \mathrm{Hom}(M)$ (specifically $G=\{f^{\left.n\right)}\ /\ n\in\Z\}\subset \mathrm{Hom}(\R^2)$), we can forget about continuity issues.

          \needspace{\baselineskip}
          \begin{definitions}\mbox{}\renewcommand{\labelenumi}{\thesection.\arabic{enumi}}
              \begin{enumerate}
                  \setcounter{enumi}{\value{theorem}}
                  \item An action is \textbf{free} if for every $x\in M$ we have $g(x)\neq x$ for all $g\in G\setminus\{e\}$. Free means that $g$ has no fixed points if $g\neq e$ i.e. it moves all the points.
                  \item An action is \textbf{properly discontinuous} if:\label{definition properly discont}
                   \begin{itemize}
                      \item[$PD1$:] For every $x\in M$ there exists some neighbourhood $U\in N(x)$ such that $g(U)\cap U=\emptyset$ for all $g\in G$ satisfying $g(x)\neq x$.
                      \item[$PD2$:] If we have $x,y\in M$ not in the same orbit ($x\notin\mathcal{O}(y)$), there exist neighbourhoods $U\in N(x)$ and $V\in N(y)$ such that $g(U)\cap V=\emptyset$ for every $g\in G$.
                   \end{itemize}
                   Intuitively, $PD1$ means that if $g$ moves points, then it moves also their neighbourhoods. $PD2$ means that points from different orbits can be separated.
              \setcounter{theorem}{\value{enumi}}
              \end{enumerate}
          \end{definitions}

          On \cite[chap.7]{Oneill_Semirimannian} it is proven that given a manifold $M$ and a subgroup $G\subset \mathrm{Diff}(M)$ of diffeomorphisms, if the action of $G$ on $M$ is free and properly discontinuous, then $\fracdiag{M}{\ G}$ is a Hausdorff manifold. From the proof it follows that the condition $PD2$ implies Hausdorff, hence if the action is free but only verifies $PD1$, we obtain a smooth manifold, but not necessarily Hausdorff (see also \cite[5.8]{Haw_y_ellis}).

\end{document}